\renewcommand{\KwData}{\textbf{Input:}}
\renewcommand{\KwResult}{\textbf{Output:}}
\DeclarePairedDelimiter\ceil{\lceil}{\rceil}
\DeclareMathOperator*{\argmin}{arg\,min}
\newcommand{\verticalcompensation}[1]{%
}
\newcolumntype{L}[1]{>{\raggedright\let\newline\\\arraybackslash\hspace{0pt}}m{#1}}
\newcolumntype{C}[1]{>{\centering\let\newline\\\arraybackslash\hspace{0pt}}m{#1}}
\newcolumntype{R}[1]{>{\raggedleft\let\newline\\\arraybackslash\hspace{0pt}}m{#1}}
  \providecommand\BibTeX{{%
    \normalfont B\kern-0.5em{\scshape i\kern-0.25em b}\kern-0.8em\TeX}}}
\begin{document}

\title{Addressing Class-Imbalance Problem for Personalized Ranking}


\author{Lu Yu}
\affiliation{%
  \institution{King Abdullah University of Science and Technology}
  \city{Thuwal}
  \country{SA}}
\email{lu.yu@kaust.edu.sa}

\author{Shichao Pei}
\affiliation{%
  \institution{King Abdullah University of Science and Technology}
  \city{Thuwal}
  \country{SA}}
\email{shichao.pei@kaust.edu.sa}

\author{Chuxu Zhang}
\affiliation{%
 \institution{Brandeis University}
 \country{USA}}
\email{chuxuzhang@brandeis.edu}

\author{Shangsong Liang}
\affiliation{%
  \institution{Sun Yat-Sen University}
  \country{China}}
\email{liangshangsong@gmail.com}

\author{Xiao Bai}
\affiliation{%
  \institution{Yahoo Research}
  \country{USA}}
\email{xbai@verizonmedia.com}

\author{Nitesh Chawla}
\affiliation{
	 \institution{University of Notre Dame}
  \country{USA}}
  \email{nchawla@nd.edu}
  
\author{Xiangliang Zhang}
\affiliation{
	 \institution{King Abdullah University of Science and Technology}
  \country{SA}}
  \email{xiangliang.zhang@kaust.edu.sa}

\renewcommand{\shortauthors}{Trovato and Tobin, et al.}

\begin{abstract}
  Pairwise ranking models have been widely used to address   recommendation problems.
The basic idea is to learn the rank of users' preferred items through  separating items into \emph{positive} samples if user-item interactions exist, and \emph{negative} samples otherwise. Due to the limited number of observable interactions, pairwise ranking models face  serious \emph{class-imbalance} issues.
Our theoretical analysis shows that current sampling-based methods cause the vertex-level imbalance problem, 
which makes the norm of  learned item embeddings towards infinite after a certain training iterations, and consequently  results in vanishing gradient and affects the model inference results. 
We thus propose an efficient \emph{\underline{Vi}tal \underline{N}egative \underline{S}ampler} (VINS)  to alleviate the class-imbalance issue for pairwise ranking model, in particular for deep learning models optimized by gradient methods.  
The core of VINS is a bias sampler with reject probability that will tend to accept a negative candidate with a larger degree weight than the given positive item.
Evaluation results on several real datasets demonstrate that the proposed sampling method speeds up the training procedure 30\% to 50\% for ranking models ranging from shallow to deep, while maintaining and even improving the quality of ranking results in top-N item recommendation. 
\end{abstract}

\begin{CCSXML}
<ccs2012>
 <concept>
  <concept_id>10010520.10010553.10010562</concept_id>
  <concept_desc>Computer systems organization~Embedded systems</concept_desc>
  <concept_significance>500</concept_significance>
 </concept>
 <concept>
  <concept_id>10010520.10010575.10010755</concept_id>
  <concept_desc>Computer systems organization~Redundancy</concept_desc>
  <concept_significance>300</concept_significance>
 </concept>
 <concept>
  <concept_id>10010520.10010553.10010554</concept_id>
  <concept_desc>Computer systems organization~Robotics</concept_desc>
  <concept_significance>100</concept_significance>
 </concept>
 <concept>
  <concept_id>10003033.10003083.10003095</concept_id>
  <concept_desc>Networks~Network reliability</concept_desc>
  <concept_significance>100</concept_significance>
 </concept>
</ccs2012>
\end{CCSXML}

\ccsdesc[500]{Information systems~Personalization}
\ccsdesc[500]{Computing methodologies~Neural Networks}

\keywords{Pairwise Learning to Rank, Item Recommendation, Class Imbalance}

\maketitle

\section{Introduction}
Offering personalized service to users is outstanding as an important task, for example, ranking the top-$N$ items that a user may like. Solutions to such kind of problems are usually designed on a bipartite graph, where edges indicate the observed interactions of user-item pairs. Users' preference on items is modeled by  \emph{pairwise loss} functions by assuming that items with  interactions from a user are of more interest to this user than those without interactions.  The loss function thus involves pairwise comparison between an observed (\emph{positive}) edge and an unobserved (\emph{negative}) edge. The optimization process thus suffers from the \emph{class-imbalance} issue due to the fact that  the observed (\emph{positive}) edges are always much less than the unobserved (\emph{negative}) ones, i.e., the graph is parse.


Pioneering works dealing with the class-imbalance problem can be categorized into two main families: using \emph{stationary} sampling or using \emph{dynamic} sampling. Approaches in the former family usually start from the \emph{edge-level} class-imbalance issue through under-sampling negative edges from a pre-defined stationary distribution (e.g., uniform~\cite{Rendle:2009:BBP}, or power function over vertex popularity~\cite{Rendle:2014:IPL,Mikolov:2013:DRW}), or over-sampling positive edges by creating instances through the social connection~\cite{Zhao:2014:LSC}. Despite of the effectiveness and efficiency of sampling from a stationary distribution, they ignore the fact that class-imbalance issue   also exists in vertex side  (causing \emph{vertex-level} class-imbalance). That is,   the number of times  a vertex appears in positive edges is extremely smaller or larger than that in the negative ones.  
Moreover, they can't capture the dynamics of changes of relative ranking order between positive and negative samples, as shown in Figure \ref{fig:toy:exam}(a) and \ref{fig:toy:exam}(b). From Figure \ref{fig:toy:exam}(a) we can see that it's   easy to find an order-violated item  for pairwise loss optimization at the initial state, because there are many negative items ranking higher than the positive item. However, as the learning process moves forward, massive number of negative items are distinguished well from the positive item, shown in Figure \ref{fig:toy:exam}(b). At this time, a large portion of the negative items are useless for  pairwise loss optimization, because they already rank lower than the positive item. 
Ignoring the changes of such relative ranking order and still sampling with a stationary distribution will waste lots of trials on finding useless negative items.
\begin{figure}[tp]
\includegraphics[trim = 15 0 0 0, scale=0.65]{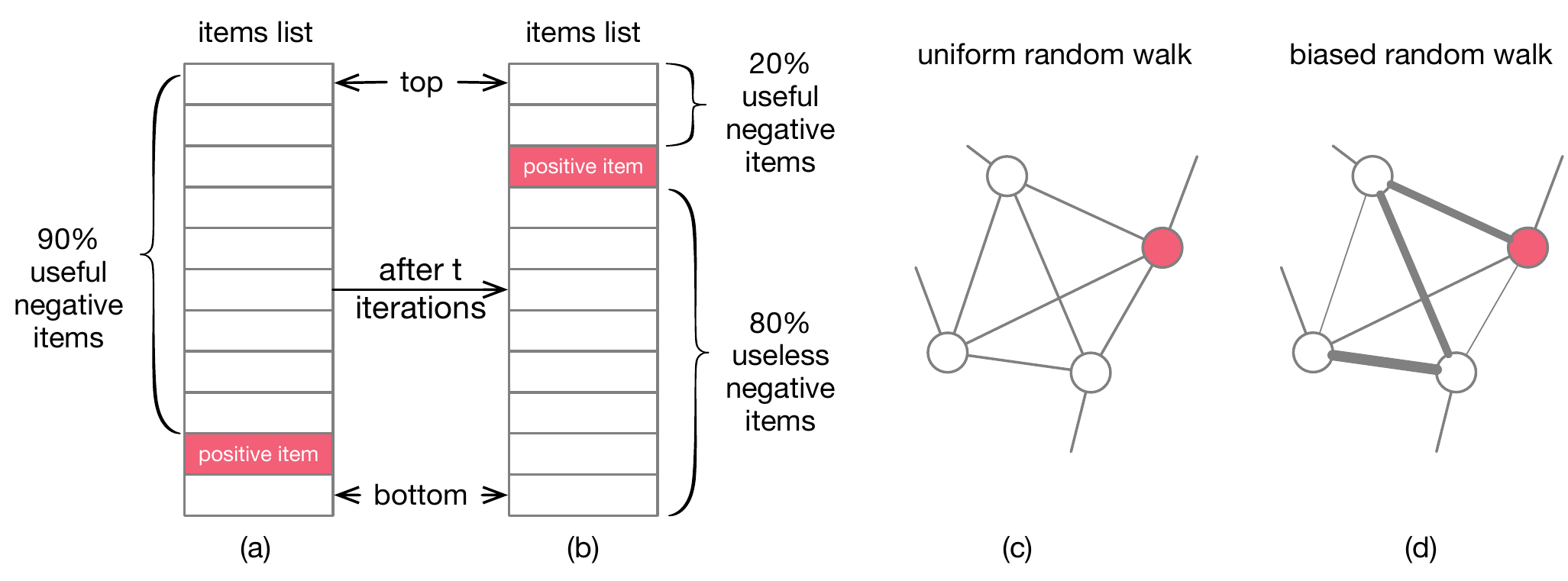}
\caption{Illustration of finding useful negative items for pairwise loss optimization: (a) is the initial stage of optimization when  it's easy to get one negative item; (b) shows that useful negative items are more difficult to get as the learning process moves forwards; (c) sampling negative items from uniform distribution equals to do unbiased random walk on fully connected item-item graph; (d) presents an alternative solution depending on a bias random walk.}
\label{fig:toy:exam}
\end{figure}

Recently dynamic sampling approaches~\cite{Weston:2011:WSU,Yuan:2016:LLO}  have shown their significant contribution to selecting vital negative instances by considering the hardness of sampling a negative sample. However, existing dynamic methods have several drawbacks: 1) they lack systematically understanding their connection to class-imbalance issue, leading to only sampling candidate from uniform distribution; 2) they have to find a violated negative sample through searching massive candidates, causing high computation complexity (over ten times higher than sampling from stationary distribution).

In this work, we aim at finding clues that can help to design a  {\bf faster dynamic negative sampler} for the personalized ranking task. We find that sampling from uniform distribution can be regarded as a random walk with a transition probability matrix $\mathcal{P}$ for arbitrary node pair in a fully connected item-item graph, which is presented in Figure \ref{fig:toy:exam}(c). 
Intuitively, nodes (items) are different in their nature (e.g., degree, betweenness).  A biased transition matrix $\mathcal{P}^*$ might be more helpful on finding the desired negative items, than a uniform random $\mathcal{P}$, as shown in Figure \ref{fig:toy:exam}(d). Through theoretical analysis, we find that one of the potential solutions to decode the biased transition process and walking with a biased transition matrix $\mathcal{P}^*$ is to tackle the class-imbalance issue.

To achieve this goal, it is essential to first dissect the impact of class-imbalance issue.
More specifically, we investigate the following questions:
\begin{itemize}[leftmargin=0.6cm]
\item[Q1] how  the class-imbalance problem is reflected in current sampling-based pairwise ranking approaches? 
\item[Q2] what is the impact of the imbalance problem on learning optimal pairwise ranking model?
\item[Q3] how can we resolve the class-imbalance issue and design a faster dynamic sampling approach to boost ranking quality? 
\end{itemize}

We answer the above questions with theoretical analysis in Section \ref{sec:analysis}. The brief  summary is, to Q1,  if negative instances are sampled from a uniform distribution (e.g., in \cite{Rendle:2009:BBP}),  vertexes with high degrees are under-sampled as negative samples, while ``cold-start" vertexes with low degrees are over-sampled. To Q2, we theoretically show that the class-imbalance issue will result in frequency clustering phenomenon where the learned embeddings of items with close popularity will gather together, and cause gradient vanishment at the output loss. Based on the above insights, for Q3, we propose  an efficient \emph{\underline{Vi}tal \underline{N}egative \underline{S}ampler} (VINS), which explicitly considers both \emph{edge}- and \emph{vertex}-level class-imbalance issue. 

In summary, our contributions of this work are as follows:
\begin{itemize}[leftmargin=*]
\item[1.] We indicate out \emph{edge-} and \emph{vertex-level} imbalance problem raised in pairwise learning loss, and provide theoretical analysis that the imbalance issue could lead to frequency clustering phenomenon and vanishing gradient at the output loss.  
\item[2.] To address the class-imbalance and vanishing gradient problem, we design an adaptive negative sampling method with a reject probability based on items' degree differences. 
\item[3.] Thoroughly experimental results demonstrate that the proposed method can speed up the training procedure  30\% to 50\% for shallow and deep ranking models, compared with the state-of-the-art dynamic sampling methods.
\end{itemize}

\section{Related Work} \label{sec:relatedwork}
Item recommendation aims at hitting users' interests by a short ranking list, which consists of the most interesting items as top as possible. Many models based on \emph{learning to rank} have been proposed, ranging from point-wise~\cite{Hu:2008:ALS,He:2016:eALS}, pairwise~\cite{Rendle:2009:BBP,Lu:2018:WR}, to list-wise~\cite{Shi:2010:LLR,Shi:2012:CLM} ranking methods. Along with recent advances on designing deep neural networks (DNNs) for computer vision, text mining, etc., an increasing number of deep recommender methods~\cite{Wu:2016:CDA,Rendle:2010:FPM,He:2017:NCF,Hid:2016:GRU4Sys,Yu:2019:MAR} are proposed to act as the relevance predictor for different types of learning to rank loss. In this work, we focus on understanding the pairwise ranking optimization problem. 

In recommendation problems, pairwise comparison usually happens between an observed (\emph{positive}) and an unobserved (\emph{negative}) edge, when the interactions between users and items are represented as a bipartite graph. 
 Such an idea  results in a serious \emph{class-imbalance} issue due to the pairwise comparison between a small set of interacted items (\emph{positive as minority class}) and a very large set of all remaining items (\emph{negative as majority class}). Pioneering work proposed by Rendle \emph{et al.}~\cite{Rendle:2009:BBP} presented an under-sampling approach via uniformly sampling a negative edge for a given positive edge. Following the idea in~\cite{Rendle:2009:BBP}, Zhao \emph{et al.}~\cite{Zhao:2014:LSC} proposed an over-sampling method by employing social theory to create synthetic positive instances. Ding \emph{et al.}~\cite{ding2018sam} augment pairwise samples with view data. However, these sampling strategies discard a fact that each item has its own properties, \emph{e.g.}, degree, betweenness. Rendle \emph{et al.}~\cite{Rendle:2014:IPL} considered vertex properties and  proposed to sample a negative instance from an exponential function over the order of vertex degree. Similar ideas have been popularly employed in the embedding learning models (e.g., DeepWalk~\cite{Perozzi:2014:DOL}, Word2Vec~\cite{Mikolov:2013:DRW,armandpour2019robust,almagro2019improving}) via sampling negative instances over a power function of vertex popularity. Despite of the effectiveness and efficiency of sampling from a stationary distribution (e.g., uniform, or power function over vertex popularity), they ignore the impact of relative order between positive and negative samples during the learning processes, as shown in Figure \ref{fig:toy:exam}(a) and \ref{fig:toy:exam}(b). 
 
Recently dynamic sampling approaches~\cite{Weston:2011:WSU,Yuan:2016:LLO,chen2018improving} aiming at estimating the rank order of positive samples have shown  significant contribution of selecting vital negative instances. As shown by the empirical analysis  in~\cite{Hsiao:2014:SCR}, dynamic sampling methods based on~\cite{Weston:2011:WSU} need to utilize a proper margin parameter. Along with the growing of iterations, the positive items are promoted quickly to high ranking positions, which make sampling a violated negative items become very difficult~\cite{Hsiao:2014:SCR}, also demonstrated in Figure \ref{fig:toy:exam}(a) and \ref{fig:toy:exam}(b). Besides considering ranking order, Wang \emph{et al.}~\cite{Wang2019MGI} regard dynamic sampling as a minmax game. Some works also employ adversarial methods to create noise samples for learning more robust model, for example, Self-Paced Network Embedding~\cite{Gao:2018:SNE}, IRGAN~\cite{Wang:2017:IRGAN} \emph{etc}. Though they're very effective approaches, existing dynamic methods based on sampling instances from uniform distribution will need lots of computation resources to search a violated negative sample.  From thoroughly theoretical proof and empirical analysis we demonstrate that class-imbalance problem will lead to gradient vanishment and frequency clustering phenomenon. This finding helps to explain the reason why relative-order methods are superior to sampling methods from a static distribution. 

\begin{algorithm}[tp]
 \caption{$\mathsf{\textsc{Pairwise Personalized Ranking}}$}
 \label{pair:rank}
 \KwData \ user-item interaction graph $G=(V,E)$, negative sampling distribution $Q(j)$\\
 \KwResult \ learned ranking model $\theta$\\
 randomly initialize model parameters $\theta$\\
 \For{$iter \leftarrow 1$ to $n$ }{
	 \For{$e_{ui} \in E$}{
 		sample a negative candidate item \emph{j} from $Q(j)$;\\
		$\theta \leftarrow \theta + \nabla \mathcal{L}(G)|_{\theta}$;\\
 	}
 }
 return $\theta$\;
\end{algorithm}

\section{Preliminaries and Analysis} \label{sec:analysis}
Let's use $G=(V,E)$ to represent a user-item interaction graph, where vertex set $V=U\cup I$ contains users \emph{U} and items \emph{I}, and $e_{ui} \in E$ denotes an observed interaction (\emph{e.g.} click, purchase behaviors) between user \emph{u} and item \emph{i}. The relationship between user \emph{u} and item \emph{i} can be measured by a factorization focused method, known as $x_{ui} = P_u\cdot P_i$, where $P_u = f(u | \theta_u) \in \mathbb{R}^d$ and $P_i = g(i | \theta_i) \in \mathbb{R}^d$ are the representation of user $u$ and item $i$ generated by deep neural network $f(\cdot)$ and $g(\cdot)$ with parameters $\theta_u$ and $\theta_i$, respectively. To learn vertex representation that can be used to accurately infer users' preferences on items, pairwise ranking approaches usually regard the observed edges $e_{ui}$ as positive pairs, and all the other combinations $e_{uj}\in (U\times I \setminus E)$ as negative ones. Then a set of triplets $D = \{(u,i,j) | e_{ui} \in E, e_{uj}\in (U\times I \setminus E)\}$ can be constructed base on a general assumption that the induced relevance of an observed  user-item pair should be larger than the unobserved one, that is, $x_{ui} > x_{uj}$. To model such contrastive relation, one popular solution is to induce pairwise loss function as follows:
\begin{equation}
\centering
\label{graph:loss:nt}
\mathcal{L}(G) = \sum_{(u,i,j)\in D} w_{ui} \cdot \ell_{ij}^u(x_{ui}, x_{uj}),
\end{equation}
where $\ell_{ij}^u(\cdot)$ can be hinge, logistic or cross entropy function that raises an effective loss for any triplet with incorrect prediction (\emph{i.e.} $x_{uj} > x_{ui}$) that violates the pairwise assumption. $w_{ui}$ is the a weight factor which shows the complexity to discriminate the given comparison sample.

The optimization of  Equation (\ref{graph:loss:nt}) involves an extreme class-imbalance, because in practical scenario, the number of unobserved interactions $e_{uj}\notin E$ (negative) is usually extremely larger than the observed $e_{ui}\in E$ (positive). The imbalance between $e_{ui}\in E$ and $e_{uj}\notin E$ in pairwise loss can be regarded as the \emph{\textbf{Edge-level Imbalance}} issue.

Since the class-imbalance  problem is caused by the majority of negative edges, under-sampling majority is a practical solution for it~\cite{Rendle:2009:BBP,Mikolov:2013:DRW}. Let's take the most popular strategy of under-sampling negative edges as an example (e.g., in \cite{Rendle:2009:BBP,Mikolov:2013:DRW}) shown in Algorithm \ref{pair:rank}. For a given positive edge $e_{ui}\in E$, we can sample a negative edge by fixing user $u\in U$, then sample one item $j\in I, e_{uj}\notin E$ with replacement from a static distribution $\boldsymbol{\pi}=\{\pi(i), i\in I\}$, where $\pi(i) = d_i^{\beta}, \beta\in [0,1]$ denotes a weight function of item degree $d_i$. Then we can optimize the objective function in Equation (\ref{graph:loss:nt}) with the constructed pairwise samples $\tilde{D} \in D$. 

However, sampling from a static distribution takes no notice of the ability of the learned model on distinguishing positive and negative samples. As we discussed before, the number of effective negative items that violate the pairwise assumption will become less and less as most of positive items are promoted quickly close to the top position. Without realizing this situation, most of constructed triplets in $\tilde{D}$ will generate meaningless loss to update the model. \textbf{In most of pairwise ranking models, how to select effective pairwise comparison samples plays an indispensable role in boosting the ranking performance}. In next sections, we'd like to present the challenges raised by the class-imbalance issue on selecting the effective pairwise comparison samples, and how to address these challenges with a general adaptive sampling method.

\subsection{Vertex-level Imbalance from  Sampling (Q1)}\label{vii}
Under-sampling approach can well solve the edge-level imbalance issue. However, it will introduce a \emph{\textbf{vertex-level imbalance}}, which has not been aware of, and initiates our study. 

\begin{definition}[Vertex-level Imbalance]
A vertex can appear in either positive or negative edges. In our case, item $i$ appears as a positive one for user $u$, but can be a negative one for other users.  Vertex-level imbalance happens when the number of times that a vertex appears in observed edges is extremely smaller or larger than  that in the unobserved ones.
\end{definition}
For an item $i$, its imbalance value can be defined as the ratio of this item $i$'s positive occurrence over the negative one. Through theoretical analysis, we find that item imbalance value has positive relation to the item degree.



\begin{figure}
\centering
\begin{tabular}{c}
\includegraphics[trim=50 10 30 20,scale=0.75]{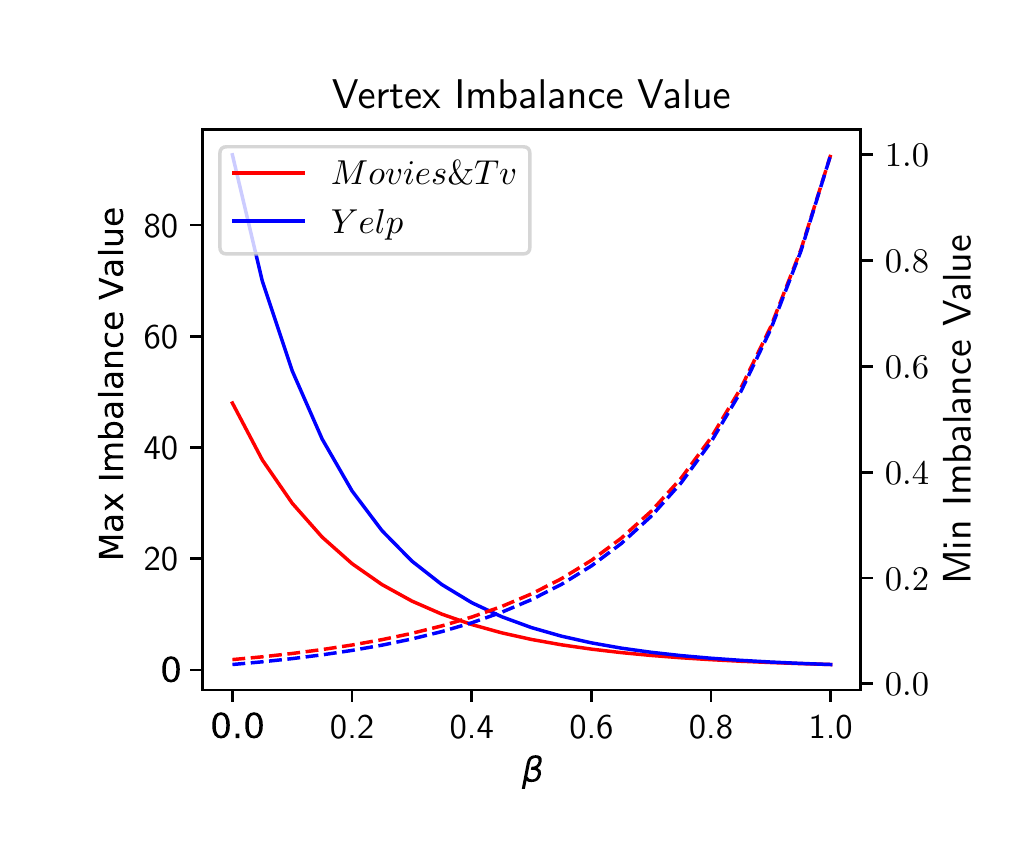}
\end{tabular}
\caption{Maximum (solid line) and minimum (dash line) imbalance value along with different decay parameter $\beta$ on Yelp and Amazon Movies\&Tv datasets. }
\label{fig:imb:v}
\end{figure}

\begin{theorem}\label{imb:val}
By sampling negative items with   a static distribution $\boldsymbol{\pi}=\{\pi(i)=d_i^{\beta} | \beta\in [0,1], i\in I\}$, if existing two different items with $d_i > d_j$, then the imbalance value of item $i$ is larger to item $j$.
\end{theorem}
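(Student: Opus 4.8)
The plan is to derive a closed form for the imbalance value of an arbitrary item $i$ and then read off its monotone dependence on $d_i$. First I would make the two occurrence counts precise. The \emph{positive occurrence} of item $i$ is just the number of observed edges incident to it, i.e.\ $d_i$, since in one pass of Algorithm~\ref{pair:rank} item $i$ sits on the positive side of exactly one triplet per incident edge $e_{ui}\in E$. For the \emph{negative occurrence}, note that one epoch performs exactly $|E|$ independent negative-sampling trials (one per positive edge), each drawing item $j$ with probability $\pi(j)/Z$ where $Z=\sum_{k\in I}d_k^{\beta}$ is the normalizing constant; hence the expected number of times item $i$ is drawn as a negative candidate is $|E|\cdot d_i^{\beta}/Z$. (The constraint $e_{uj}\notin E$ only excludes the handful of items already linked to $u$; since in sparse recommendation data each user's degree is negligible compared with $|I|$, this perturbs the per-user normalizer by a lower-order term, and equivalently one may view the step as rejection sampling, which leaves the marginal over the negative candidates proportional to $d_j^{\beta}$. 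I would state this as an explicit modelling assumption.)

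Next I would form the ratio. Dividing positive by negative occurrence gives
\[
\mathrm{imb}(i)=\frac{d_i}{|E|\,d_i^{\beta}/Z}=\frac{Z}{|E|}\,d_i^{\,1-\beta}.
\]
The prefactor $Z/|E|$ is the same for every item, so comparing two items' imbalance values reduces to comparing $d_i^{\,1-\beta}$ with $d_j^{\,1-\beta}$. Since $\beta\in[0,1]$ the exponent $1-\beta$ is non-negative, so $t\mapsto t^{\,1-\beta}$ is non-decreasing on $(0,\infty)$ and strictly increasing whenever $\beta<1$; therefore $d_i>d_j$ implies $\mathrm{imb}(i)\ge\mathrm{imb}(j)$, with strict inequality for $\beta<1$. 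I would also flag the boundary case $\beta=1$, where $Z=\sum_k d_k=|E|$ and $\mathrm{imb}(i)\equiv 1$ for every item, so the statement should be read as ``at least as large,'' and genuinely larger on the open interval $\beta\in[0,1)$, which is the regime of interest.

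The computation itself is short; the only real obstacle is conceptual rather than algebraic, namely pinning down the correct notion of ``negative occurrence'' (expected count over the sampling randomness, per epoch — note that iterating over more epochs scales both sides by the same factor and is therefore irrelevant) and justifying that the per-user exclusion of already-observed items can be absorbed into a rejection-sampling view, so that the effective negative-sampling distribution is globally proportional to $d_j^{\beta}$. Once that is granted, the closed form above together with the monotonicity of $t^{\,1-\beta}$ completes the proof, and as a by-product it recovers the qualitative claims used elsewhere in the paper: at $\beta=0$ (uniform sampling) $\mathrm{imb}(i)\propto d_i$, so high-degree items are systematically under-sampled as negatives while low-degree items are over-sampled, whereas $\beta=1$ exactly balances the two sides.
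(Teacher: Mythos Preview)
Your argument follows the same outline as the paper's---compute the imbalance value explicitly and check monotonicity in $d_i$---but uses a coarser count for the negative occurrences. The paper does not take the number of trials in which item $i$ is eligible to be a negative as $|E|$; it observes that item $i$ is positive on exactly $d_i$ of the $|E|$ edges and can therefore be drawn as a negative only on the remaining $|E|-d_i$, giving
\[
IV(i)=\frac{d_i}{p(i)\,(|E|-d_i)}=\frac{d_i^{1-\beta}\sum_{k\in I}\pi(k)}{|E|-d_i}.
\]
Monotonicity then follows by checking that $f(x)=c_1 x^{1-\beta}/(c_2-x)$ has positive derivative on $(0,c_2)$. The practical difference surfaces exactly at the boundary you flagged: with the $-d_i$ correction, $f$ is still strictly increasing when $\beta=1$ (it reduces to $c_1/(c_2-x)$), so the paper obtains the strict inequality $IV(i)>IV(j)$ across the full closed interval $\beta\in[0,1]$, whereas your simplified formula collapses to a constant at $\beta=1$ and forces you to weaken the conclusion there. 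Your version is algebraically cleaner and makes the $d_i^{1-\beta}$ dependence transparent; the paper's version buys strict monotonicity on the closed interval at the cost of a slightly messier function to differentiate.
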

\begin{proof}
Assuming that in each iteration of optimizing Equation (\ref{graph:loss:nt}), we only impose loss by comparing the observed with unobserved edges. For each observed edge $e_{ui}$, we will sample one negative edge by fixing the user vertex \emph{u}. 
With a given graph $G$ with $|E|$ observed edges, item \emph{i} can only appear in $d_i$ edges as positive samples. In other words, item \emph{i} could appear as negative in the other $|E| - d_i$ edges with probability $p(i)$ when sampling with a static distribution $\boldsymbol{\pi}$ defined as $p(i) = \pi(i)/\sum_{j\in I}\pi(j)$.
Then, the expected number of times that the item \emph{i} acts as a negative sample is $p(i) \cdot (|E| - d_i)$. Then we have the imbalance value ($IV$) of item \emph{i} as follows:
\begin{equation*} \small
\begin{aligned}
IV(i) = \frac {d_i} {p(i)\cdot (|E| - d_i)} &= \frac {d_i^{1-\beta} \cdot \sum_{j\in I}\pi(j)} {|E| - d_i}.\\
\end{aligned}
\end{equation*}
With the given user-item graph $G$, both $\sum_{j\in I}\pi(j)$ and $|E|$ are constant. Let's define a function $f(x) = \frac {x^{1-\beta} \cdot c_1} {c_2 - x}$, where $c_1 = \sum_{j\in I}\pi(j)$ and $c_2 = |E|$. Then we can have first-order derivative $\nabla f(x) > 0$, which means $IV(i) > IV(j)$ if $d_i > d_j$.
\end{proof}

The above analysis shows that the degree of the most popular and sparse item will determine the upper and lower bound of item imbalance value for a given graph $G$. We illustrate the maximum and minimum imbalance value in Figure \ref{fig:imb:v}, obtained by the empirically calculated $IV(i)$ from two real datasets with different decay factor $\beta$. We can see that popular vertexes are under-sampled as negative samples, while ``cold-start" vertexes are over-sampled.

\subsection{Impact of Class-imbalance (Q2)}\label{imb:analysis}
We next move to the question ``\textbf{what is the impact of the class-imbalance problem on pairwise loss function optimization?}". Before answering this question, we first introduce an imbalanced item theorem inspired by the \emph{Popular Item Theorem} proposed recently in~\cite{Lee:2015:LMF}, which proves that the norm of latent vector of the popular items will be towards infinite after a certain number of iterations. We extend the theorem as follows:

\begin{theorem}[Imbalanced Item Theorem]\label{SVT}
Suppose there exists an imbalanced item $i$ with $IV(i)\gg 1$, such that for all neighbor users $u \in \mathcal{N}_i$, $x_{ui} \geq x_{uj}$ for all other observed item $j$ of user $u$. Furthermore, after certain iterations $\tau$, the low-dimensional representation $P_u$ of all vertices $u\in \mathcal{N}_i$ converges to certain extent. That is, there exists a vector $\hat{P}^t$ in all iteration $t > \tau$, inner-product $(\hat{P}^t,P_u^{\tau})>0$. Then the norm of $P_i$ of the imbalanced item $i$ will tend to grow to infinity if $\frac{\partial \ell_{ij}^u} {\partial x_{ui}} > 0$ for all $i$ with $x_{ui} > x_{uj}$, as shown $\displaystyle{\lim_{t\rightarrow \infty}}=||P_i^t||^2 = \infty$.
\end{theorem}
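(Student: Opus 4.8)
The plan is to adapt the proof strategy of the Popular Item Theorem~\cite{Lee:2015:LMF}: rather than controlling the whole vector $P_i^{t}$, I track its projection onto one fixed favorable direction and show that this single scalar already diverges, which forces $\|P_i^{t}\|$ to diverge.

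First I would write out the gradient-ascent update that Algorithm~\ref{pair:rank} performs on $P_i$. Since $x_{ui}=P_u\cdot P_i$ we have $\partial x_{ui}/\partial P_i=P_u$, so over one pass item $i$ receives a \emph{positive-edge} contribution $\eta_t\sum_{u\in\mathcal N_i}w_{ui}\,\frac{\partial\ell_{ij}^u}{\partial x_{ui}}\,P_u^{t}$ from the $d_i$ triplets in which $i$ is the preferred item, plus a contribution from the triplets in which $i$ happens to be the sampled negative item $j$ for some other user. By Theorem~\ref{imb:val} and the hypothesis $IV(i)\gg1$, the expected number of the latter is $d_i/IV(i)\ll d_i$, and although each such term carries the opposite sign its magnitude is of the same (bounded) order, so for $IV(i)$ large the net drift of $P_i$ is dominated by the positive-edge sum; up to this lower-order perturbation,
\[
P_i^{t+1}=P_i^{t}+\eta_t\sum_{u\in\mathcal N_i}w_{ui}\,\frac{\partial\ell_{ij}^u}{\partial x_{ui}}\,P_u^{t}.
\]

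Next I would extract the fixed direction from the convergence hypothesis, which I read as: for $t>\tau$ the vectors $P_u^{t}$ $(u\in\mathcal N_i)$ stay near their values $P_u^{\tau}$, which lie in a common open half-space, so there is a unit vector $\hat P$ and an $\epsilon>0$ with $\langle\hat P,P_u^{t}\rangle\ge\epsilon$ for every $u\in\mathcal N_i$ and every $t>\tau$. Projecting the update onto $\hat P$ and setting $a_t:=\langle P_i^{t},\hat P\rangle$ gives
\[
a_{t+1}-a_t=\eta_t\sum_{u\in\mathcal N_i}w_{ui}\,\frac{\partial\ell_{ij}^u}{\partial x_{ui}}\,\langle P_u^{t},\hat P\rangle .
\]
Every factor on the right is strictly positive: $\eta_t>0$, $w_{ui}>0$, $\langle P_u^{t},\hat P\rangle\ge\epsilon$, and $\frac{\partial\ell_{ij}^u}{\partial x_{ui}}>0$ by assumption — here the hypothesis $x_{ui}\ge x_{uj}$ is exactly what places us in the regime where this derivative is positive. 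Hence $(a_t)_{t>\tau}$ is strictly increasing, and since Cauchy--Schwarz gives $\|P_i^{t}\|\ge a_t$, the norm is bounded below by an increasing sequence.

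The delicate step is to upgrade ``increasing'' to ``divergent'', because as $\|P_i^{t}\|$ grows the margins $x_{ui}-x_{uj}$ grow and $\frac{\partial\ell_{ij}^u}{\partial x_{ui}}$ — e.g.\ $1-\sigma(x_{ui}-x_{uj})$ for the logistic loss — decays toward $0$, so the increments $a_{t+1}-a_t$ shrink and could in principle sum to a finite limit. I would close this by contradiction: if $\sup_t\|P_i^{t}\|=M<\infty$, then together with the bounded (converged) $P_u^{t}$ the scores $x_{ui}^{t},x_{uj}^{t}$ all lie in a fixed compact set, on which $\frac{\partial\ell_{ij}^u}{\partial x_{ui}}$ is continuous and, by $x_{ui}\ge x_{uj}$, strictly positive, hence bounded below there by some $c>0$; then $a_{t+1}-a_t\ge\eta\,c\,\epsilon\,w_{\min}\,|\mathcal N_i|=:\delta>0$ for all $t>\tau$, so $a_t\ge a_{\tau}+(t-\tau)\delta\to\infty$ and $\|P_i^{t}\|\ge a_t\to\infty$, contradicting $\|P_i^{t}\|\le M$. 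Therefore $\lim_{t\to\infty}\|P_i^{t}\|^2=\infty$. I expect this last interplay between the diverging norm and the vanishing loss-derivative to be the main obstacle; the two secondary points needing care are making the convergence hypothesis quantitative (a single $\hat P$ with a uniform lower bound $\epsilon$) and controlling the rare updates in which $i$ plays the negative role so that they cannot reverse the drift, which is precisely where $IV(i)\gg1$ is used.
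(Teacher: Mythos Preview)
Your approach is essentially the same as the paper's: both track the projection of $P_i^t$ onto a single favorable direction (the paper sets $\vec c_1=\hat P^\tau$ and completes it to an orthogonal basis, then lower–bounds $\|P_i^n\|^2$ by the squared $\vec c_1$–coefficient; you work directly with $a_t=\langle P_i^t,\hat P\rangle$ and invoke Cauchy--Schwarz). Both handle the negative–role updates the same way, by appealing to $IV(i)\gg 1$ so that $|\mathcal N_i|\gg|\mathcal N_i^-|$ and the positive–edge drift dominates.

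The one substantive difference is in the divergence step. The paper simply asserts that, for an imbalanced item, $\lambda_1(t)>0$ ``with a relative high probability'' and then passes to $(n-\tau)\cdot\min_{t}\lambda_1(t)\to\infty$ without arguing why this minimum is strictly positive or why the shrinking loss derivative cannot make $\sum_t\lambda_1(t)$ converge. You explicitly flag this as the main obstacle and close it with a clean contradiction argument: boundedness of $\|P_i^t\|$ would confine all scores to a compact set on which $\partial\ell_{ij}^u/\partial x_{ui}$ is bounded below, forcing linear growth of $a_t$. That is a genuine tightening of the paper's argument, and your identification of the two secondary issues (a uniform $\epsilon$ for the half-space hypothesis, and quantitative control of the negative-role updates) is accurate—the paper leaves both at the same informal level you do.
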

 \begin{proof}
 Given latent space with \emph{d} dimensions, there exists \emph{d} - 1 mutually orthogonal vectors $\vec{c}_2, \vec{c}_3, \cdots, \vec{c}_d$ and $\vec{c}_1=\hat{P}^{\tau}$. Let $\Delta_i^+(t)=\sum_{u\in \mathcal{N}_i} \frac{\partial \ell_{ij}^u} {\partial x_{ui}}P_u^t$ denote the gradients received when item \emph{i} acted as a positive sample, and $\Delta_i^-(t)=-\sum_{u\in \mathcal{N}_i^-} \frac{\partial \ell_{ij}^u} {\partial x_{ui}}P_{u}^t$ denote the gradients received when acting as a negative sample. It's noted that if   item \emph{i} has a large imbalance value, the size of $|\mathcal{N}_i|$ is usually $\gg |\mathcal{N}_i^-|$, and vice versa. Then for any iteration $n > \tau$, the embedding of item \emph{i} is updated with gradient descent method as:
\begin{equation*} \small
\begin{aligned}
P_i^{n} = P_i^{\tau} + \eta\sum_{n\geq t > \tau}( \Delta_i^+(t) + \Delta_i^-(t))
\end{aligned}
\end{equation*}
Then we can perform coordinate axis transform on $P_i^t$ and $P_u^t$ to $c_1, \cdots, c_d$.
\begin{equation*} \small
\begin{aligned}
\Rightarrow P_i^{n}&=\alpha_i^1\vec{c}_1+\cdots+\alpha_i^d\vec{c}_d\\
& + \eta \sum_{n\geq t > \tau} \sum_{u\in \mathcal{N}_i} \frac{\partial \ell_{ij}^u} {\partial x_{ui}}(t) (\beta_u^1\vec{c}_1+\cdots+\beta_u^d\vec{c}_d)\\
& - \eta \sum_{n\geq t > \tau} \sum_{u\in \mathcal{N}_i^-} \frac{\partial \ell_{ij}^{u}} {\partial x_{ui}} (t) (\gamma_{u}^1\vec{c}_1+\cdots+\gamma_{u}^d\vec{c}_d)
\end{aligned}
\end{equation*}
Now we have $P_i^{\tau}=\alpha_i^1\vec{c}_1+\cdots+\alpha_i^d\vec{c}_d$ and $P_u^t=\beta_u^1\vec{c}_1+\cdots+\beta_u^d\vec{c}_d$, $\frac{\partial \ell_{ij}^u} {\partial x_{ui}}(t) > 0$, $\beta_u^1>0$ as inner-product $<P_u^t, \vec{c}_1>$ = $<P_u^t, \hat{P}_u^{\tau}>$, and all other variables $\in \mathbb{R}$.
\begin{equation*}
\begin{aligned}
\Rightarrow P_i^{n} &=\alpha_i^1\vec{c}_1+\cdots+\alpha_i^d\vec{c}_d + \sum_{n\geq t > \tau} \lambda_1(t)\vec{c}_1 + \cdots + \lambda_d \vec{c}_d,
\end{aligned}
\end{equation*}
where $\lambda_k(t)=\eta \big(\sum_{u\in \mathcal{N}_i} \frac{\partial \ell_{ij}^u} {\partial x_{ui}}(t)\beta_u^k-\sum_{u\in \mathcal{N}_i^-} \frac{\partial \ell_{ij}^{u}} {\partial x_{ui}} (t)\gamma_u^k\big)$ for $k\in[1,d]$. Since coordinates $\vec{c}_1, \vec{c}_2, \cdots, \vec{c}_d$ are manually orthogonal.
\begin{equation*} \small
\begin{aligned}
\Rightarrow \displaystyle{\lim_{n\rightarrow \infty}}||P_i^n||^2 &= \displaystyle{\lim_{t\rightarrow \infty}}(\alpha_i^1 + \sum_{n\geq t > \tau} \lambda_1(t))^2 ||\vec{c}_1||^2 + \cdots \\
& +  (\alpha_i^d +  \sum_{n\geq t > \tau}\lambda_d(t))^2 ||\vec{c}_d||^2 \\
& \geq \displaystyle{\lim_{n\rightarrow \infty}} (\alpha_i^1 + \sum_{n\geq t > \tau} \lambda_1(t))^2 ||\vec{c}_1||^2 \\
&\geq \displaystyle{\lim_{n\rightarrow \infty}} (\alpha_i^1 + (n - \tau) \cdot min_{n\geq t > \tau}\lambda_1(t))^2 ||\vec{c}_1||^2
\end{aligned}
\end{equation*}
And we have
\begin{equation*} \small
\begin{aligned}
\lambda_1(t) &= \eta \big(\sum_{u\in \mathcal{N}_i} \frac{\partial \ell_{ij}^u} {\partial x_{ui}}(t)\beta_u^1-\sum_{u\in \mathcal{N}_i^-} \frac{\partial \ell_{ij}^{u}} {\partial x_{ui}} (t)\gamma_u^1\big),\\
where &\ \ \ \frac{\partial \ell_{ij}^u} {\partial x_{ui}}(t)\beta_u^1 > 0
\end{aligned}
\end{equation*}
For imbalanced items, the value of $\lambda_1(t)$ will be dominated by the size of $\mathcal{N}_i$ and $\mathcal{N}_i^-$. If an imbalanced item with a very large imbalance value, then we could have $\lambda_1(t) > 0$ with a relative high probability. Then we have $\displaystyle{\lim_{n\rightarrow \infty}} (\alpha_i^1 + (n - \tau) \cdot min_{n\geq t > \tau}\lambda_1(t))^2 ||\vec{c}||^2 = \infty$. 
 \end{proof}
\begin{figure}
\centering
\begin{tabular}{c c}
\includegraphics[trim=50 10 30 20,scale=0.65]{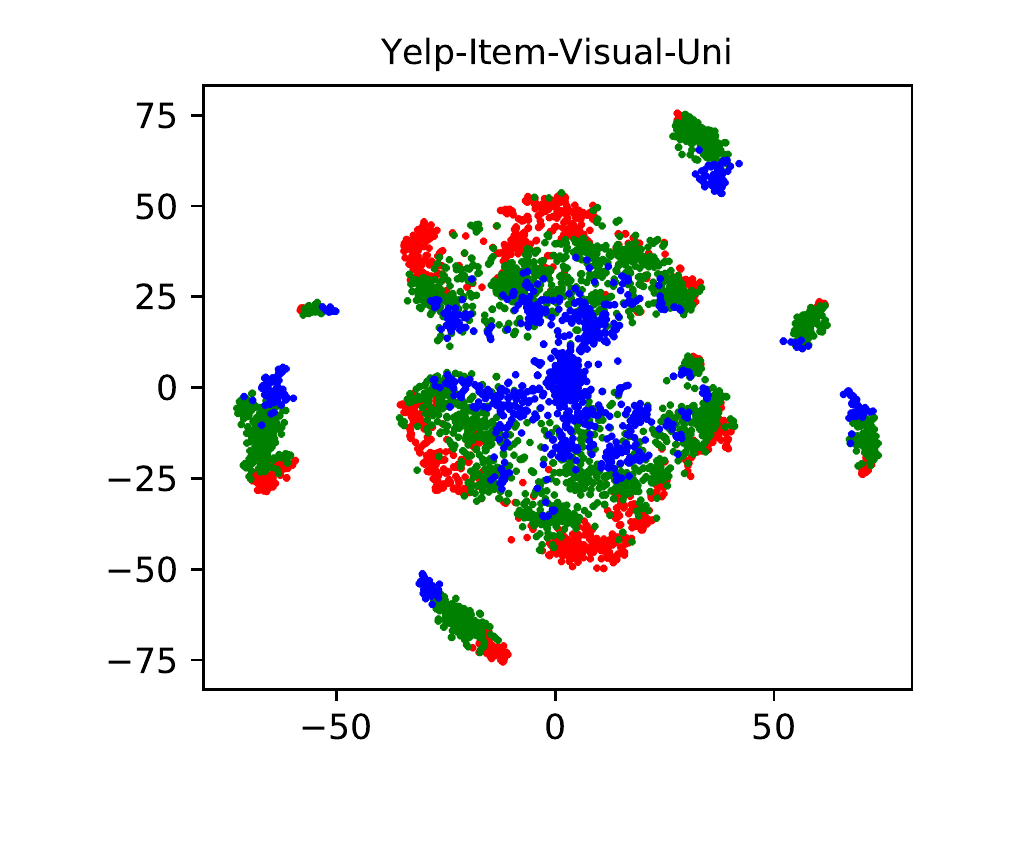}&\includegraphics[trim=10 10 30 20,scale=0.65]{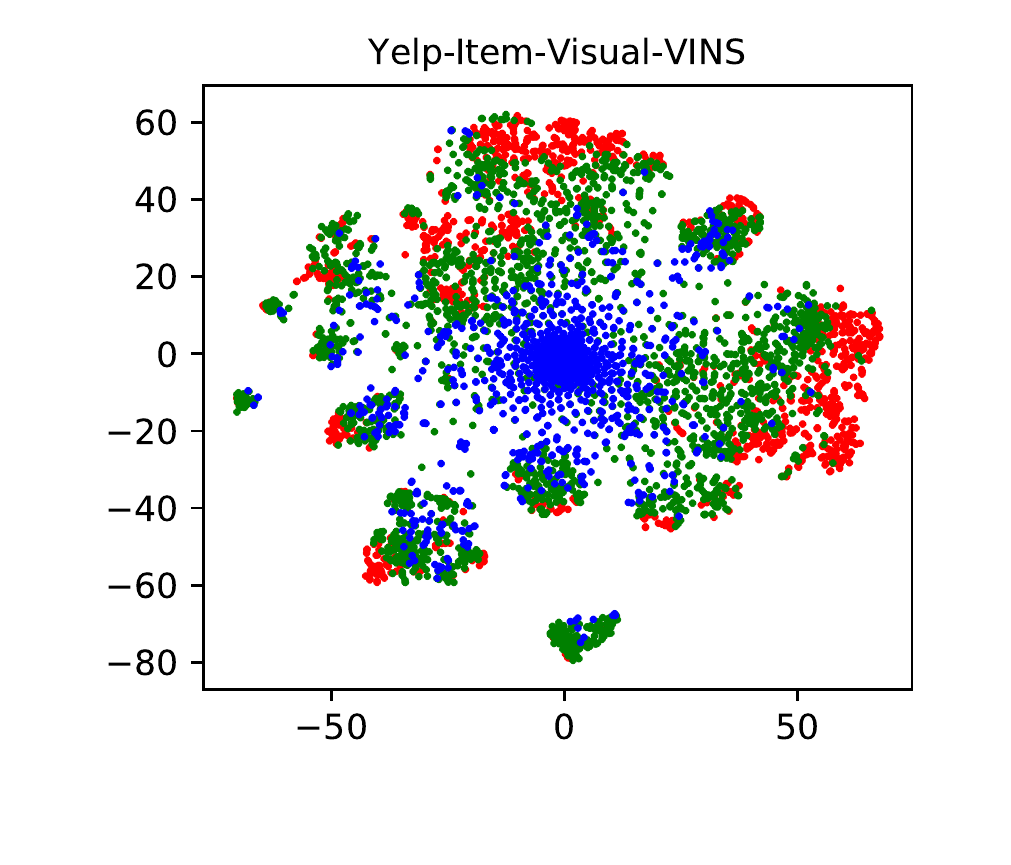}\\
\includegraphics[trim=50 10 30 20,scale=0.65]{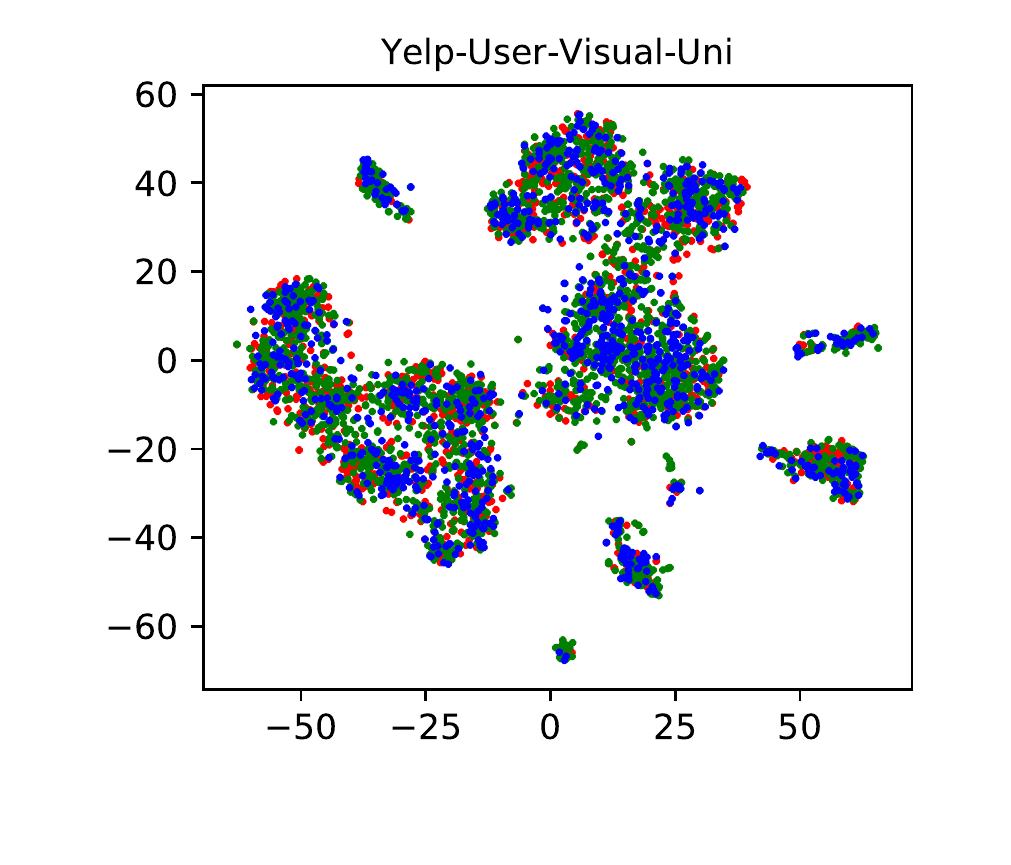}&\includegraphics[trim=10 10 30 20,scale=0.65]{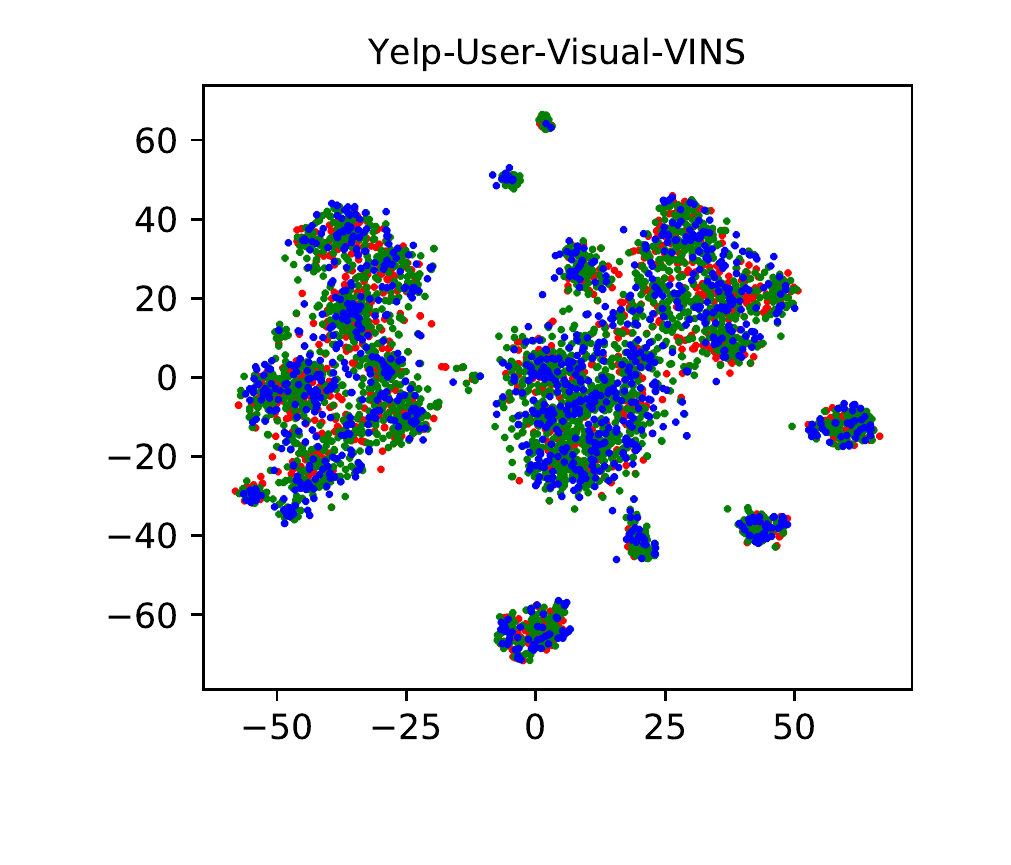}
\end{tabular}
\vspace{-0.5cm}
\caption{Visualizing the projection of learned embeddings with the classical uniform sampling and the proposed sampler \emph{VINS}  by the T-SNE algorithm into two-dimensional space (colored by vertex degree levels, red--top 25\%, blue--bottom 25\%, green--the rest).}
\label{fig:emb:proj}
\vspace{-0.5cm}
\end{figure}
\subsubsection{Frequency Clustering Phenomenon}
The imbalanced item theorem implies that  the learned embeddings of items will appear a certain pattern that is closely related to item's degree distribution, which has positive relationship with imbalance value. To confirm, we optimize logistic pairwise loss function by sampling negative samples from a uniform distribution and also by using the proposed method VINS on the experimental data. Since there's no vertex-level imbalance problem in the user side, the learned user embeddings are independent on the degree information. From Figure~\ref{fig:emb:proj}, we can see that the learned embeddings of items by the uniform sampling approach appear very clear \emph{frequency clustering} phenomenon, where items with similar degree values gather together. 
The more popular items tend to have larger embedding norms, which matches the statement in the \emph{Imbalanced Item Theorem}. While in the  embeddings learned by the proposed approach VINS that explicitly considers vertex-level class-imbalance, those bottom items tend to spread across the frequency margins.  This illustrates that taking class-imbalance into consideration can break the frequency clustering constraint and make representation learning focus on intrinsic graph structure.

\subsubsection{Gradient Vanishment}\label{grad:vani:iss}
Besides the frequency clustering phenomenon, another issue caused by the infinite norm is the gradient vanishment in pairwise loss optimization. Following the under-sampling method described in Section \ref{sec:analysis}, gradient update for model parameters can be carried out for a given pairwise sample $(u,i,j)$. 
After $t>\tau$ iterations, the model parameters $\theta_i$ can be updated with stochastic gradient descent method as follows:
\begin{equation}
\theta_i^{t+1} = \theta_i^t + \eta \cdot \lambda_{ij}^u\cdot \frac {\partial x_{ui}} {\partial \theta_i},\ \lambda_{ij}^u = \frac {\partial \ell_{ij}^u} {\partial x_{ui}},
\end{equation}
where $\eta$ denotes the learning rate, and $\frac {\partial x_{ui}} {\partial \theta_i}$ represents a gradient backpropagation operation according to the chain rule. The value of $\lambda_{ij}^u$ depends on the type of loss function. 

If we use hinge loss as an instance, $\ell_{ij}^u(x_{ui}, x_{uj}) = max\{0, x_{uj} + \epsilon - x_{ui}\}$. According the imbalance item theorem, the norm of learned embeddings of those imbalanced items will become extremely large. Let's fold out $x_{ui} = P_u\cdot P_i= ||P_u|| \cdot ||P_i|| \cdot cos(P_u,P_i)$. If positive item \emph{i} suffers from imbalanced issue and has  a large norm, i.e., $ ||P_i|| \gg ||P_j||$, the relevance prediction for user \emph{u} will be dominated by the norm of item \emph{i}'s embedding. Then, the induced hinge loss will be very close to zero. While popular items take up a large portion of the observed edges, most of the training samples will have $\lambda_{ij}^u\rightarrow 0$ according to Theorem \ref{imb:val} and Theorem \ref{SVT}. \textbf{It suggests that massive number of pairwise samples are meaningless for updating the model, and only a small number of them are valuable.} In terms of logistic and cross entropy loss for pairwise learning, similar derivations can be conducted by only changing the definition of loss functions.

\section{Vital Negative Sampler}\label{vins}
We have seen the impact of class-imbalance issue.
In this section, we introduce our  Vital Negative Sampler (VINS), which includes  a RejectSamper explicitly addressing the class-imbalance issue. 

\subsection{Sampling with Reject Probability (RejectSampler)}
Combining Theorem (\ref{imb:val}) and the frequency clustering phenomenon, \textbf{we find that the key of the solution is to reduce the imbalance value of popular items, but increase the imbalance value of sparse items}. 
We thus design a negative sampling approach which tends to sample a negative item \emph{j} with a larger degree than the positive item \emph{i}, rather than a negative item with a smaller degree than item $i$. More specifically, for a given positive sample $e_{ui}$, we sample a negative item \emph{j} with reject probability 1 - $min\{\frac {\pi(j)} {\pi(i)}, 1\}$. With this reject probability, we can increase the chances of popular items exposed as negative samples while downgrading the chances of sparse items. The detail about the implementation of \emph{RejectSampler} is given in Algorithm \ref{rej:sam}. 

\begin{algorithm}[tp]
 \caption{$\mathsf{\textsc{RejectSampler}}$}
 \label{rej:sam}
 \KwData \ item $i$, max shot $s$, weight distribution $\boldsymbol{\pi}$\\
 \KwResult \ selected item $j$\\
 selected\_j = -1, maxi\_deg = -1\\
 \For{$ iter \leftarrow 1$ \KwTo $s$}{
 	j = randint($Z$);\\
	// in case of the extreme popular item i\\
	\If{$\pi(j) > maxi\_deg$}{
  		$maxi\_deg$ = $\pi(j)$\; 
		selected\_j = j;
	}
	
	reject\_ratio = 1 - min $\{\frac {\pi(j)} {\pi(i)}, 1\}$;
	
	\If{random.uniform() $>$ reject\_ratio}{
		selected\_j = j;\\
		break;
	}
 }
 return $selected\_j $\;
\end{algorithm}

\begin{theorem}\label{the:pi:con}
For a given observed edge $e_{ui}$, if we sample a negative sample $j$ with accept probability $min\{\frac {\pi(j)} {\pi(i)}, 1\}$, the sampling procedure equals to a Markov Chain process which satisfies the detailed balance condition.
\end{theorem}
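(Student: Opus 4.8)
The plan is to recognize Algorithm~\ref{rej:sam} as an instance of the Metropolis rule with a uniform (hence symmetric) proposal, and then verify the detailed balance condition directly against the target $p^{\ast}(i)=\pi(i)/\sum_{k\in I}\pi(k)$. First I would set up the Markov chain explicitly. Viewing negative sampling as a random walk on the fully connected item--item graph of Figure~\ref{fig:toy:exam}(c)--(d), take the current state to be the reference item $i$; one step draws a candidate $j$ uniformly from $I$, so with proposal probability $q(i,j)=1/|I|$, and accepts the move with probability $a(i,j)=\min\{\pi(j)/\pi(i),1\}$, otherwise the walk stays at $i$. This is exactly the single accept/reject test inside the loop of Algorithm~\ref{rej:sam} (the bookkeeping with \texttt{maxi\_deg} is only a safeguard against exhausting \texttt{max shot} and does not change the one-step law in the unbounded-retry limit), and it yields the transition kernel
\[
P(i,j)=\frac{1}{|I|}\min\Big\{\tfrac{\pi(j)}{\pi(i)},1\Big\}\ \ (j\neq i),\qquad P(i,i)=1-\sum_{j\neq i}P(i,j).
\]

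Next I would guess and verify the invariant measure. Using the elementary identity $\pi(i)\min\{\pi(j)/\pi(i),1\}=\min\{\pi(i),\pi(j)\}$, for any $i\neq j$,
\[
p^{\ast}(i)\,P(i,j)=\frac{\min\{\pi(i),\pi(j)\}}{|I|\,\sum_{k\in I}\pi(k)},
\]
whose right-hand side is manifestly symmetric under $i\leftrightarrow j$; the case $i=j$ is trivial. Hence $p^{\ast}(i)P(i,j)=p^{\ast}(j)P(j,i)$ for every pair, which is precisely the detailed balance condition, and summing over $i$ shows $p^{\ast}$ is stationary, i.e. the chain is reversible with equilibrium distribution proportional to the degree weights $\boldsymbol{\pi}$. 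I would additionally record that $P$ is irreducible and aperiodic --- every item is reachable in one step with positive probability, and $P(i,i)>0$ as soon as some $\pi(j)<\pi(i)$ --- so $p^{\ast}$ is the unique stationary distribution, which is what justifies using RejectSampler as an asymptotically unbiased drawer from $\boldsymbol{\pi}$.

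The algebra here is routine; the part that needs care is making the correspondence between the finite-budget procedure actually coded in Algorithm~\ref{rej:sam} and the ideal Metropolis chain precise. I would argue that, conditioned on an acceptance occurring within the $s$ shots, the returned item has exactly the Metropolis one-step distribution from state $i$, so the theorem should be read as a statement about this underlying kernel $P$; the deterministic fallback to the largest-weight candidate seen is an event of probability $\to 0$ as $s\to\infty$ (since every proposed $j$ is accepted with strictly positive probability $\min\{\pi(j)/\pi(i),1\}$), and for finite $s$ it only nudges the sampler slightly toward popular items, consistent with the stated design goal of raising the imbalance value of popular items. Beyond this subtlety no real obstacle is expected.
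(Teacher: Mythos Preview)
Your proposal is correct and follows essentially the same route as the paper: both set up the uniform proposal $q(i,j)=1/|I|$, form the Metropolis-modified kernel, and verify detailed balance via the symmetry $\pi(i)\min\{\pi(j)/\pi(i),1\}=\min\{\pi(i),\pi(j)\}$. Your version is somewhat more complete than the paper's, which stops at the detailed-balance check; your added remarks on irreducibility/aperiodicity and on the finite-budget fallback in Algorithm~\ref{rej:sam} are not in the original proof but are sensible clarifications.
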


\begin{proof}
Let's define a Markov transition matrix $\mathcal{P}$, where each element $\mathcal{P}_{ij} = \frac {1} {Z}$ denotes the transition probability from item \emph{i} to \emph{j}. Then with defined acceptance probability $min\{\frac {\pi(j)} {\pi(i)}, 1\}$, we can have a modified transition matrix $\mathcal{P}^*$:
\begin{equation*}
\mathcal{P}_{ij}^* = \left\{\begin{array}{rcl} 
	\mathcal{P}_{ij} \cdot min\{\frac {\pi(j) \cdot \mathcal{P}_{ji}} {\pi(i) \cdot \mathcal{P}_{ij}}, 1\} & if & i \neq j\\
	1 - \sum_{v \neq i} \mathcal{P}_{iv}^* & if & i = j\end{array} \right.
\end{equation*}
To see if a transition matrix causing imbalance issue or not, we only need show  $\pi(i)\cdot \mathcal{P}_{ij}^* = \pi(j)\cdot \mathcal{P}_{ji}^*$. For $i \neq j$, we have $\mathcal{P}_{ij}\cdot min\{\pi(j), \pi(i)\}=\mathcal{P}_{ji}\cdot min\{\pi(i), \pi(j)\}$. If $i=j$, we have $\pi(i) = \pi(j)$. Clearly, we have $\pi(i)\cdot \mathcal{P}_{ij}^* = \pi(j)\cdot \mathcal{P}_{ji}^*$ for both cases.
\end{proof}

The  transition matrix $\mathcal{P}^*$ in Theorem \ref{the:pi:con} indicates a new biased random walk, as shown in Figure \ref{fig:toy:exam}(d). In fact, \emph{RejectSampler} can adapt beyond the item degree information to define the reject probability, resulting a different transition matrix $\mathcal{P}^*$. In this work, we focus on the degree, but leave for future the exploration of other graph properties that might also have positive effect on alleviating the class-imbalance problem.

\begin{algorithm}[tp]
 \caption{$\mathsf{\textsc{VINS}}$} \label{vins}
 \KwData $\; G = (V, E)$, max step $\kappa$, positive pair $(u,i)$, max shot s, margin $\epsilon$\\
 \KwResult \ negative item $\; j$, and $w_{ui}(r_i)$\\
 $selected_j = -1$, $max_j = -inf$\\
 \For{$K \leftarrow 1$ \KwTo $\kappa$}{
 	\Do{$e_{uj}\in E$}{
      		$j = RejectSampler(i,s,\boldsymbol{\pi})$\\
    	}
    
     	$x_{uji} = x_{uj} + \epsilon - x_{ui}$\\
	\If{$x_{uj} > max_j$}{
		$max_j = x_{uj}$\\
		$selected_j = j$\\
	}
	\If{$x_{uji} > 0$}{
		break;
	}
}
$r_i = \lfloor \frac {Z} {min(K, \kappa)} \rfloor$;\\
return $selected_j$, $w_{ui}(r_i)$; // $w_{ui}(r_i)$ refers to Equation (\ref{eq:wui})
\end{algorithm}

\subsection{Adaptive Negative Sampling}
The \emph{RejectSampler} in the previous subsection can address the class-imbalance issue. We next introduce the full VINS approach, which considers the {\bf  dynamic relative rank position of positive and negative} items for finding more informative negative samples avoiding $\lambda_{ij}^u \rightarrow 0$ as much as possible, which is very important for dealing with the mentioned gradient vanishment issue in Section \ref{grad:vani:iss}. Algorithm 2 presents VINS in details. For getting a negative sample, \emph{RejectSampler} is firstly used to sample an item that is not connected to user $u$ (line 5 to 7). Note that item $j$ sampled from \emph{RejectSampler} is not guaranteed to be negative for user $i$. Therefore \emph{RejectSampler} is re-called if $j$ is connected to $u$ ($e_{uj} \in E$).
 
The next step is to evaluate if the sampled item $j$ is a violated one, which satisfies $\epsilon + x_{uj}\ge x_{ui}$, where $\epsilon$ is a margin (from line 8 to 13 of Algorithm 2). In fact, there can be a set of violated negative samples, noted as $\mathcal{V}_i^u = \{j | \epsilon + x_{uj}\ge x_{ui}, e_{uj} \notin E\}$. The hardness of searching a violated negative sample increases when the positive item $i$ is ranked higher. 
This hardness is reflected as the weight factor $w_{ui}$ in Equation (\ref{graph:loss:nt}).  A smaller $w_{ui}$ indicates a harder process to find a violated item $j$ because the positive item $i$ has a relative high rank position. We thus define the weight as $w_{ui}(r_i)$, where $r_i$ is the rank-aware variable of item $i$. 

It is a non-trivial task to estimate $r_i$ and then $w_{ui}(r_i)$, because $r_i = \sum_{j\in V_i^u} \pi(j)\mathbb{I}(\epsilon + x_{uj}\ge x_{ui})$ is difficult to attain, where $\mathbb{I}(x)$ is an indicator function. We use an item buffer ${buffer}_{ui}$ with size $\kappa$ to store every sampled negative candidate $j$. Then, $r_i$ can be approximated as $r_i\approx \lfloor \frac {Z} {min(K, \kappa)} \rfloor$, where $K$ is the number of steps to find item $j$, and $Z=\sum_{i\in I}\pi(i)$. Then $w_{ui}(r_i)$ is defined as
\begin{equation}
w_{ui}(r_i) = \frac {1 + 0.5 \cdot (\ceil*{\log_2 (r_i + 1)} - 1)} {1 + 0.5 \cdot (\ceil*{\log_2 (Z + 1)} - 1)},
\label{eq:wui}
\end{equation}
As shown in line 14 and 15 in Algorithm 2.

With the selected negative item $j$ by VINS, we can construct pairwise sample $(u,i,j)$ to train the ranking model. The employment of \emph{RejectSampler} in VINS has two benefits. First, it considers the class-imbalance issue and tends to select the useful negative items than doing randomly, given the fact that items with large imbalance values  usually have large norm that makes them difficult to be distinguished from positive items. Second, it reduces the size of negative item candidate set to explore through selecting the useful negative samples to the $buffer$.

\subsection{Discussion on VINS}

\subsubsection{Complexity Discussion}
The most computationally expensive part of the proposed VINS model is the relative-order sampling procedure (line 4 to 13 in Algorithm 2). As discussed  previously, finding a violated sample needs iterative  comparison of the prediction value between a positive item and a negative item candidate. For each negative sample, the computation complexity is $O(d)$, where $d$ is the embedding size. Assume that the average number of steps to obtain a violated negative item is $h'$ and the maximum number of chances to reject a sampled item from the \emph{RejectSampler} is \emph{s}, then the time complexity of VINS will be $O(|E|\cdot (d + s) \cdot h')$. Usually, $s\ll d$ can be a very small number. Therefore, comparing the proposed approach with the state-of-the-art dynamic sampling method~\cite{Yuan:2016:LLO}, the time complexity difference will be the average number of steps $h'$ to find a violated item. From the experimental analysis, we find that the proposed \emph{RejectSampler} significantly speeds up  searching a violated sample.

\subsubsection{Connection to Existing Approaches} Most of negative sampling approaches assume that the negative items follow a pre-defined distribution $Q(j)$. According to the strategies to obtain a negative item, we can summarize the main kinds of negative samplers into three categories: user-independent, user-dependent, edge-dependent. The proposed approach (VINS) can be regarded as a general version of several methods by controlling the setting of hyper-parameters $\{\kappa, \beta\}$. 
\begin{itemize}[leftmargin=*]
\item \emph{user-independent}: As the representatives, UNI~\cite{Rendle:2009:BBP} and POP~\cite{Mikolov:2013:DRW} initialize the $Q(j)$ as a static distribution $\boldsymbol{\pi}$. VINS can actually implement these two methods by setting $\kappa = 1, \beta=0$ for UNI, and $\kappa = 1, \beta \in [0,1]$ for POP.
\item \emph{user-dependent}: This type of methods usually define a conditional distribution $Q(j | u)$ which can capture the dynamics of learning procedure to some extent. Sampling from the exact distribution $Q(j | u)$ will cost massive number of time in large-scale item database. Most of methods turn to defining a sub-optimal distribution based on a small number of candidate set. For example, DNS~\cite{Zhang:2013:OTC} greedily selects the item with the largest predicted score $x_{uj}$ from the candidate set. Self-adversarial (SA)~\cite{Sun:2019:SA} method first sample candidates from uniform distribution, then calculate the weight of candidate through a softmax($x_{uj}$) distribution. Similar idea can be found in more recent proposed method PRIS~\cite{Lian:2020:PRIS}. Different from SA, PRIS tries to estimate the distribution $Q(j | u)$ through a importance sampling approach. By borrowing ideas from GAN, IRGAN~\cite{Wang:2017:IRGAN} propose a two-agent minmax games, where generator $G$ aborbs knowledge from discriminator, then selects negative samples from $Q_G(j|u) = softmax(x_{uj})$ to update discriminator. From the view of distribution alignment, the generator actually attempts to learn distribution from the discriminator by taking \emph{Reinforcement Learning} (RL) as the workhorse. However, RL methods usually need lots of training cases to update their policy, and sampling according to the policy distribution relies on the exact distribution $Q_G(j|u)$ over the whole item set, which makes IRGAN become very slow to converge and difficult to tune the model. Moreover, the generator might have a distribution which could delay from the discriminator, which can lead to unqualified negative samples produced by the generator.
\item \emph{edge-dependent}: The methods mentioned above do not consider a fact that the ranking position of positive item \emph{i} evolves as the learning procedure move forwards, in other words, the informative negative item set also changes. The edge-dependent methods aim at selecting informative negatives from distribution $Q(j | u, i)$. As an initial study, Weston \emph{et al.}~\cite{Weston:2011:WSU} proposed the WARP loss by designing a rank-aware distribution $r_i = \sum_{j\in V_i^u} \mathbb{I}(\epsilon + x_{uj}\ge x_{ui})$. However, it's impossible to get the exact $r_i$ for every single training sample (\emph{u,i}) during the training stage. Fortunately the negative item \emph{j} can be obtained through estimating a geometric distribution $P(X = k)$ parameterized with $p = \frac {r_i} {Z}$. There're many works that are based on WARP and all of them follow the same idea as WARP to estimate the $P(X = k)$ from a uniform distribution. VINS also inherits the basic ideas from WARP but  modifies the target distribution as $r_i = \sum_{j\in V_i^u} \pi(j)\mathbb{I}(\epsilon + x_{uj}\ge x_{ui})$, and proposes to estimate it through an importance sampling method after theoretically investigating the existing class-imbalance issue and its potential influence. As the state-of-the-art variant of WARP loss, LFM-W advances WARP with a normalization term. However, estimating the geometric distribution from a uniform distribution makes LFM-W need lots of steps to find a violated sample. Moreover, LFM-W might find sub-optimal negative sample without considering the class-imbalance issue. LFM-W can be equivalent to VINS by setting $\beta=0$ and replacing the weight function $w_{ui}(r_i)$ as a truncated Harmonic Series function, \emph{i.e.} $w_{ui}(r_i)=\sum_{z=1}^{\lceil r_i \rceil} \frac {1} {z}$.
\end{itemize}

\begin{figure*}
\centering
\begin{tabular}{c c}
\includegraphics[trim=50 10 30 20,scale=0.6]{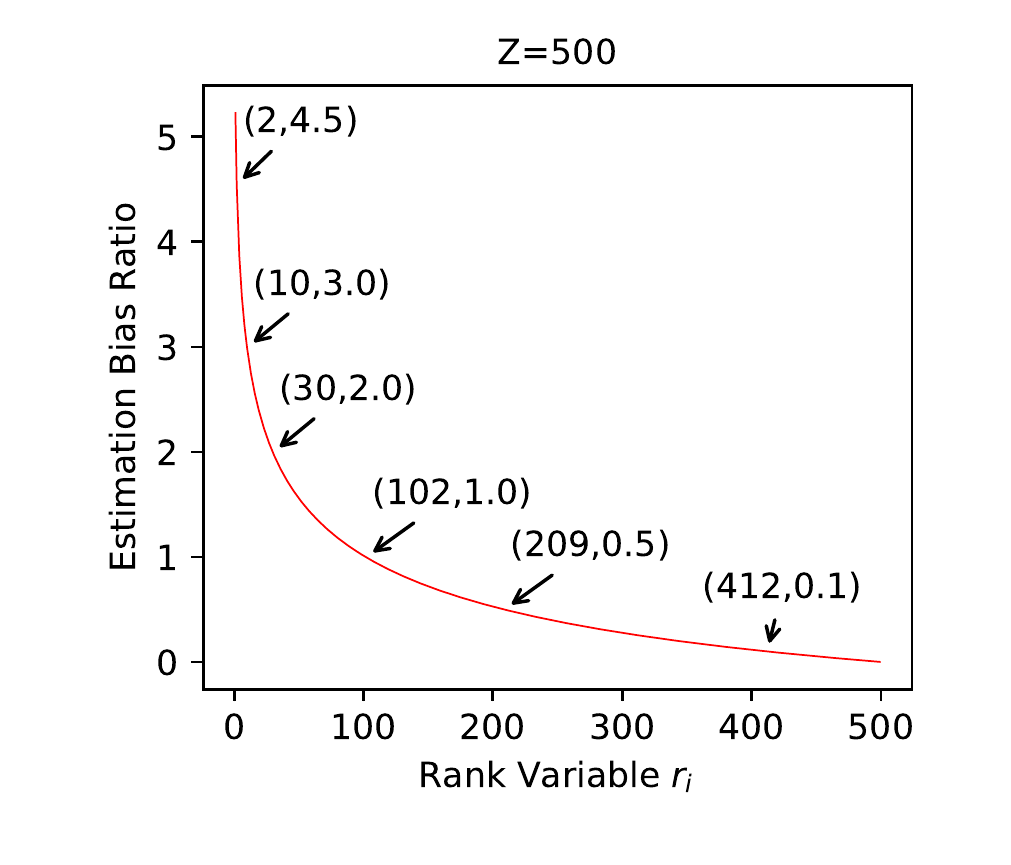}& \includegraphics[trim=10 10 30 20,scale=0.6]{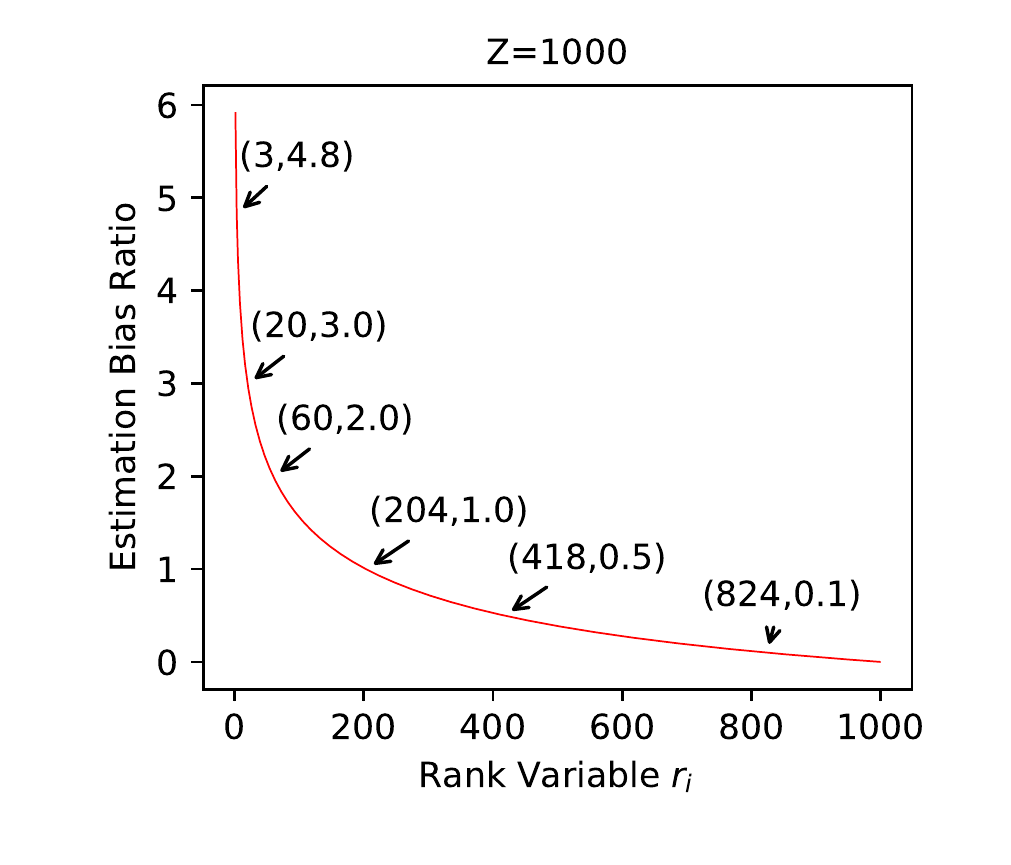}\\
\includegraphics[trim=50 10 30 20,scale=0.6]{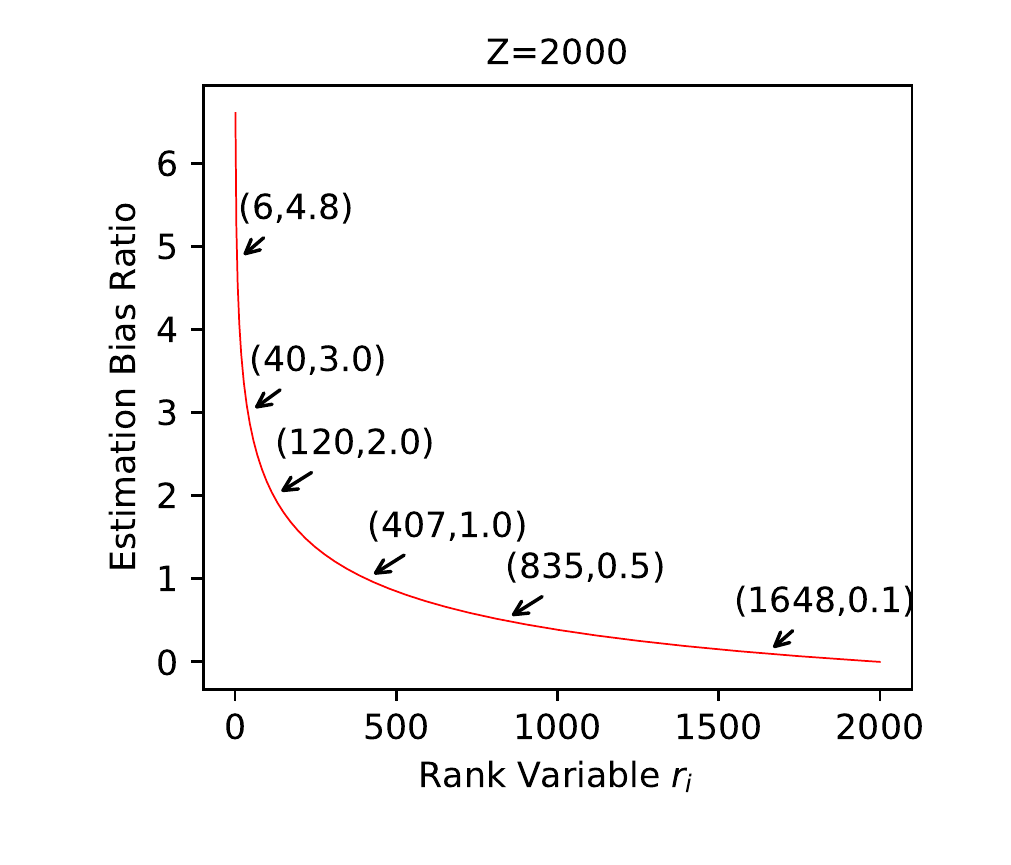}&\includegraphics[trim=10 10 30 20,scale=0.6]{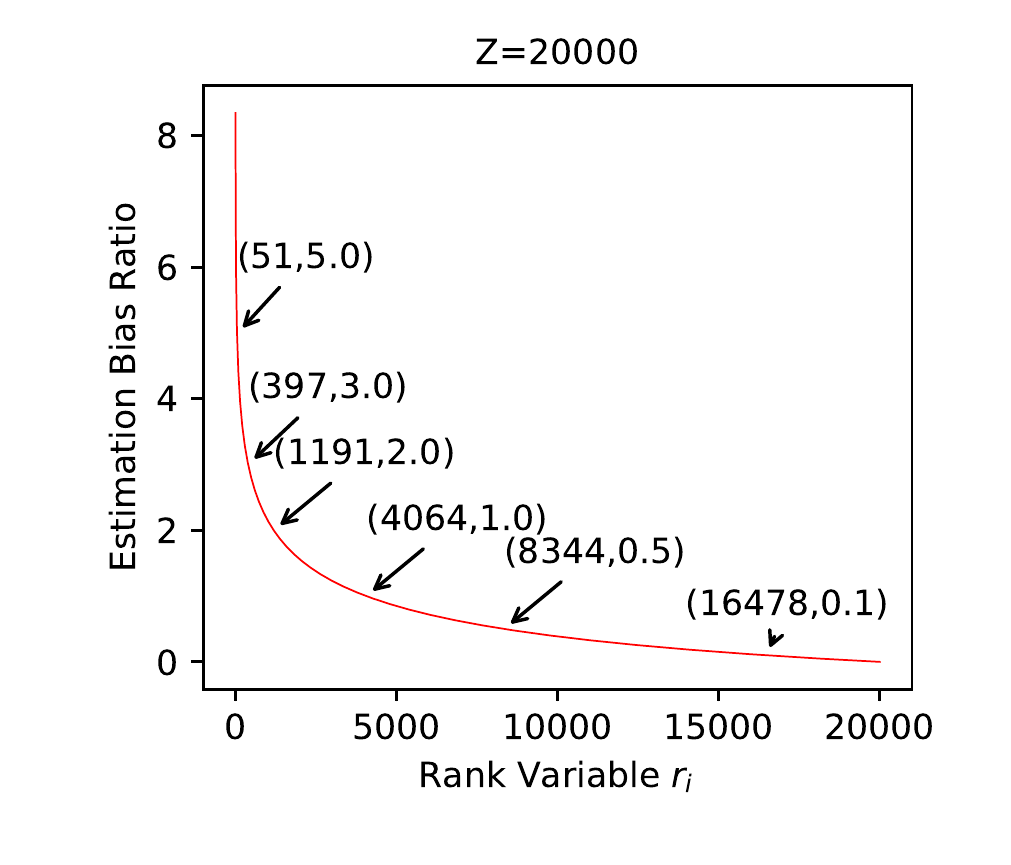}
\end{tabular}
\vspace{-0.5cm}
\caption{Top-\emph{N} recommendation evaluation with different values of \emph{N} on NDCG and Recall, due to the limited space.}
\label{fig:bias}
\end{figure*}
\subsubsection{Rank Estimation Bias Discussion}
Let $Pr(X = k) = p(1 - p)^{k - 1}$ denote geometric distribution with parameter \emph{p}, and $\{X_1, X_2, \cdots, X_n|X_i \in \mathbb{N}\}$ to be the observations. The optimized estimation of \emph{p} by maximizing the likelihood function will be $\hat{p} = 1 / \overline{X}$, where $\overline{X}=\sum_{i=1}^n X_i / n$. Now we can obtain the expectation over the estimated parameter, \emph{i.e.} $E[\hat{p}]=E[1/\overline{X}]$.
\begin{lemma}
For special case $n=1$, $E[\hat{p}]$ will be larger than \emph{p}, which means $\hat{p}$ is not an unbiased estimation.
\end{lemma}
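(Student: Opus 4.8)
The plan is to observe that with $n=1$ we simply have $\overline{X}=X_1$, so $\hat p = 1/X_1$, where $X_1$ is a single draw from the geometric law $Pr(X=k)=p(1-p)^{k-1}$, $k\in\mathbb{N}$. The claim $E[\hat p]>p$ is then an instance of \emph{strict Jensen's inequality} applied to the convex map $\phi(x)=1/x$ on $(0,\infty)$.

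First I would record that the mean of this geometric distribution is $E[X]=\sum_{k\ge 1} k\,p(1-p)^{k-1}=1/p$, which is finite, so all quantities below are well defined. Since $\phi(x)=1/x$ is strictly convex on $(0,\infty)$ and $X$ takes values in $\{1,2,\dots\}\subset(0,\infty)$, Jensen gives $E[1/X]\ge 1/E[X]=p$. For strictness I would note that $X$ is non-degenerate: for any $p\in(0,1)$ it attains the value $1$ with probability $p>0$ and a value $\ge 2$ with probability $1-p>0$, so $X$ is not almost surely constant; combined with the strict convexity of $\phi$ this upgrades the inequality to $E[\hat p]=E[1/X]>p$, as required. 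The boundary cases $p\in\{0,1\}$ are degenerate ($X\equiv 1$ when $p=1$) and excluded.

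As an independent check — and to make the overestimation quantitative — I would also evaluate $E[1/X]$ in closed form using $\sum_{k\ge1}x^k/k=-\ln(1-x)$ with $x=1-p$, obtaining
\begin{equation*}
E[\hat p]=\sum_{k\ge1}\frac{1}{k}\,p(1-p)^{k-1}=\frac{p}{1-p}\sum_{k\ge1}\frac{(1-p)^k}{k}=\frac{-p\ln p}{1-p}.
\end{equation*}
Then $E[\hat p]>p$ is equivalent to $-\ln p>1-p$, i.e. to the standard strict bound $\ln p<p-1$ valid for all $p\in(0,1)$, which closes the argument.

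The only real subtlety — hardly an obstacle — is justifying the strict rather than weak inequality: this just requires checking that the geometric law is genuinely non-degenerate (true precisely when $p\in(0,1)$) and that $1/x$ is strictly, not merely weakly, convex. Everything else is a routine computation of the geometric mean and a single application of Jensen's inequality.
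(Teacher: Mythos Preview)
Your proof is correct, but it takes a different route from the paper. The paper's argument is more elementary: it simply writes out $E[\hat p]=\sum_{k\ge 1}\frac{1}{k}p(1-p)^{k-1}$, pulls off the $k=1$ term (which equals $p$), and observes that the tail $\sum_{k\ge 2}\frac{1}{k}p(1-p)^{k-1}$ is strictly positive for $p\in(0,1)$. That is the entire proof. Your Jensen argument is cleaner in principle and generalizes immediately (for instance, it shows the bias persists for any non-degenerate positive integer-valued law), while the paper's term-splitting is the minimal computation for this specific case. Your closed-form evaluation $E[\hat p]=\frac{-p\ln p}{1-p}$ goes beyond what the paper does and quantifies the bias, which is a nice addition.
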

\begin{proof}
\begin{equation}
 \label{eq:bias:est}
 \begin{aligned}
 	E[\hat{p}] &= E[1/X_1] = \sum_{k=1}^{\infty} \frac {1} {k} p (1-p)^{k - 1}\\
		       &= p + \sum_{k=2}^{\infty} \frac {1} {k} p (1-p)^{k - 1}
 \end{aligned}
\end{equation}
for $p \in (0, 1)$ in this case, the above sum term is strictly positive.
\end{proof}
Since we can get the estimated rank position as $\hat{r}_i = \hat{p}_i \cdot Z$. To save computational cost, we usually run one time $i.e., n = 1$ to estimate the mass variable in dynamic sampling approach. Under this scheme, the estimation expectation $E[\hat{r}_i]=E[\hat{p} \cdot Z] = E[Z/X_1]$. We find that the estimation error will become smaller as the positive sample get better and better ranking position as the learning procedure move forwards. If we fold out this equation, we can get the following induction:
\begin{equation}
\label{eq:rank:est}
\begin{aligned}
E[\frac {Z} {X_1}] &= Z \sum_{k=1}^{\infty} \frac {1} {k} p (1-p)^{k - 1}\\
& = r_i + \sum_{k=2}^{\infty} \frac {r_i} {k} (1 - \frac {r_i} {Z})^{k - 1} > r_i
\end{aligned}
\end{equation}
where $r_i = Z\cdot p$ denotes ground truth value. Let $h(r_i)=r_i + \sum_{k=2}^{\infty} \frac {r_i} {k} (1 - \frac {r_i} {Z})^{k - 1}$ represent a function of $r_i$. It's very hard to analyze the gradients of function $h(\cdot)$. However, we need answer what's the exact estimation bias as the change of idea ranking $r_i$. To answer this question, we turn to analyze a ratio function $\psi(r_i)=(h(r_i) - r_i) / r_i = \sum_{k=2}^{\infty} \frac {1} {k} (1 - \frac {r_i} {Z})^{k - 1}$. Comparing to directly analyzing original function $h(r_i)$, $\psi(r_i)$ is a monotone decreasing function. Based on the feature, we empirically illustrate the change of estimation bias ratio and the rank variable $r_i$. From Figure \ref{fig:bias} we can see that as the item ranks higher, the estimation error will be smaller.


\section{Experimental Evaluation}
In this section, we report 
results to answer the following questions:
\begin{itemize}[leftmargin=0.8cm]
\item[\textbf{RQ1}] How will the item imbalance value evolve when using different sampling strategies?
\item[\textbf{RQ2}] What are the advantages of   VINS, comparing with the state-of-the-art baselines?
\item[\textbf{RQ3}] How   VINS can improve the computationally expensive models by sampling the most useful training data?
\end{itemize}

\subsection{Experimental Setting}
\subsubsection{Datasets}
To validate the proposed sampling method, we use four publicly available datasets, from Yelp Challenge (13th round)~\footnote{https://www.yelp.com/dataset/challenge}, Amazon~\footnote{http://jmcauley.ucsd.edu/data/amazon/} and Steam~\cite{Kang:2019:SARec}, with statistics information in \textbf{Table \ref{stat:data}}. 
Following the processing     in~\cite{Tang:2018:PTS,He:2016:FOS}, we discard inactive users and items with fewer than 10 feedbacks since cold-start recommendation usually is regarded as a separate issue in the literature~\cite{He:2016:FOS,Rendle:2010:FPM}. For each dataset, we convert star-rating into binary feedback regardless of the specific rating values since we care more about the applications without explicit user feedbacks like ratings~\cite{He:2017:TR,He:2016:eALS}. We split all datasets into training and testing set by holding out the last 20\% review behaviors of each user into the testing set, the rest as the training data. We evaluate all of algorithms by top-\emph{N} ranking metrics including \textbf{F1}~\cite{Karypis:2001:EIT}, \textbf{NDCG}~\cite{Weimer:2007:CRM}. 

\begin{itemize}[leftmargin=0.5cm]
 \item \textbf{Precision}~\cite{Karypis:2001:EIT}: it reflects recommendation accuracy of the top-\emph{N} ranked items generated by a specific algorithm:
 \begin{equation}
 Pre@N = \frac {1} {|U|} \sum_{u\in U} \frac {|R_u\cap T_u|} {|R_u|}
 \end{equation}
 \item \textbf{Recall}~\cite{Karypis:2001:EIT}: it measures the ratio of true rated items being retrieved in the top-\emph{N} ranked list, defined as follows:
 \begin{equation}
 Rec@N = \frac {1} {|U|} \sum_{u\in U} \frac {|R_u\cap T_u|} {|T_u|}
 \end{equation} 
 \item \textbf{F1}~\cite{Karypis:2001:EIT}: The F1 is a unified metric of precision and recall and can be defined as follows:
\begin{equation}
F1@N =2 \frac {Pre@N \cdot Rec@N} {Pre@N + Rec@N}
\end{equation}

\item \textbf{NDCG}~\cite{Weimer:2007:CRM}: the normalized discounted cumulative gain  measures the ranking performance by taking the position of correct items into consideration. We can first calculate NDCG for each user, then do average on them. Formally, NDCG for a single user \emph{u} can be calculated as:
\begin{equation}
NDCG@N = \frac {1} {Z} DCG@N = \frac {1} {Z} \sum_{i=1}^{N}\frac {2^{I(r_u^i \cap T_u)} - 1} {\log_2(i + 1)}
\end{equation}
where $I(\cdot)$ is the indicate function, and \emph{Z} denotes the ideal discounted cumulative gain .

\end{itemize}

\begin{table}[!tp]
\caption{Statistical information of the datasets.}
\label{stat:data}
\centering
\begin{adjustbox}{max width=8.5cm}
\begin{tabular}{c c c c c}
\toprule
	Data & \#Users & \#Items & \#Observation & Sparsity\\
\midrule
	Yelp & 113,917 & 93,850 & 3,181,432 & 99.97\%\\
\midrule
	Movies\&Tv & 40,928 & 51,509 & 1,163,413 & 99.94\%\\
\midrule
	CDs\&Vinyl & 26,876 & 66,820 & 770,188 & 99.95\%\\
\midrule
	Steam & 20,074 & 12,438 & 648,202 & 99.74\%\\
\bottomrule
\end{tabular}
\end{adjustbox}
\end{table}

\subsubsection{Recommenders}In this work, we mainly study the state-of-the-art sampling methods in terms of their effectiveness and efficiency. To uncover the features of different samplers, we consider representative factorization models (MF and FPMC) and one state-of-the-art deep model (MARank) which can capture users' temporal dynamic preferences.
\begin{itemize}[leftmargin=*]
  \item \textbf{Matrix Factorization} (MF)~\cite{Rendle:2009:BBP}: This method uses a basic matrix factorization model as the scoring function. It can be regarded as a shallow neural network with a single hidden layer which takes user and item one-hot vector as input~\cite{He:2017:NCF}.
\item \textbf{Factorizing Personalized Markov Chains} (FPMC)~\cite{Rendle:2010:FPM}: It's a method that combines the MF and factorized Markov Chain over item sequence for next-item prediction.
\item \textbf{MARank}~\cite{Yu:2019:MAR}:  It incorporates both individual- and union-level item relation into a deep multi-order attentive encoder, instead of only using factorized item transition probability.
\end{itemize}

\subsubsection{Baselines/Negative Samplers \& Pairwise Loss} To valid the proposed sampling method, we mainly consider the following state-of-the-art negative sampling methods as baselines, including two sampling methods from static distribution, \textbf{Uni}~\cite{Rendle:2009:BBP} sampling a negative item from uniform distribution, \textbf{POP}~\cite{Mikolov:2013:DRW} sampling negative items from a given distribution $\boldsymbol{\pi}$, relative-order sampling methods, \textbf{Dynamic Negative Sampling (DNS)}~\cite{Zhang:2013:OTC}, \textbf{LFM-D}~\cite{Yuan:2016:LLO} and \textbf{LFM-W}~\cite{Yuan:2016:LLO}, \textbf{AOBPR}~\cite{Rendle:2014:IPL}, \textbf{CML}~\cite{Hsieh:2017:CML}, adversarial-like methods (\textbf{SA})~\cite{Sun:2019:SA}, \textbf{PRIS}~\cite{Lian:2020:PRIS}, and \textbf{IRGAN}~\cite{Wang:2017:IRGAN}.


Since the samplers are independent of the specific recommenders to work with, \textbf{we take MF as the base model to study their features, then switch to more complicated models (\emph{i.e.,} FPMC, MARank)}. To keep the consistency of experimental setting for different baselines except IRGAN, we instantiate $\ell(\cdot)$ as $-\ln \sigma(\cdot)$~\cite{Rendle:2009:BBP} for all baselines involved in this work, shown as follows: 
\begin{equation} \small
\centering
\label{graph:loss:nt:mod:reg}
\argmin_\theta \mathcal{L}(G) = \sum_{(u,i,j)\in D}-w_{ui} \cdot \ln \sigma(x_{ui} - x_{uj}) + \lambda ||\theta||_F^2,
\end{equation}
where $||\cdot||_F$ denotes Frobenius norm. $x_{ui}$ or $x_{uj}$ represents the specific recommender. $w_{ui}$ here will be 1 for sampling methods without explicit definition on it.
\begin{figure}
\centering
\begin{tabular}{c c }
\includegraphics[trim=30 10 30 15,scale=0.65]{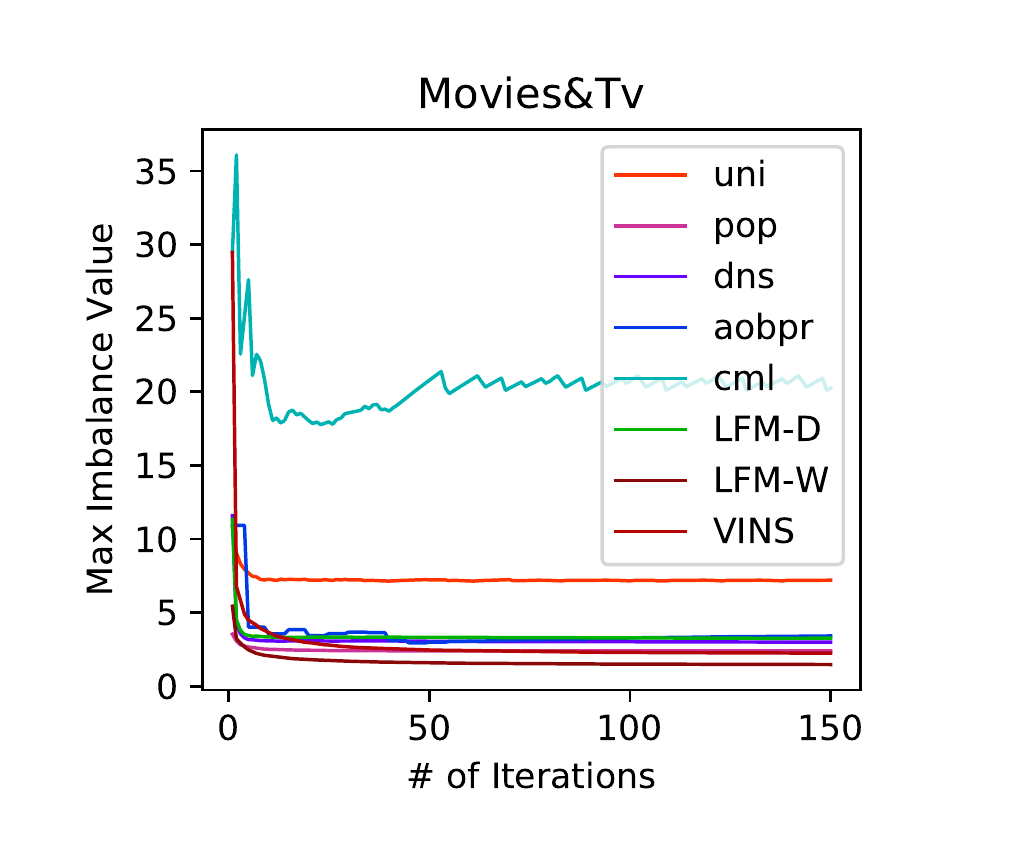}& \includegraphics[trim=10 10 30 30,scale=0.65]{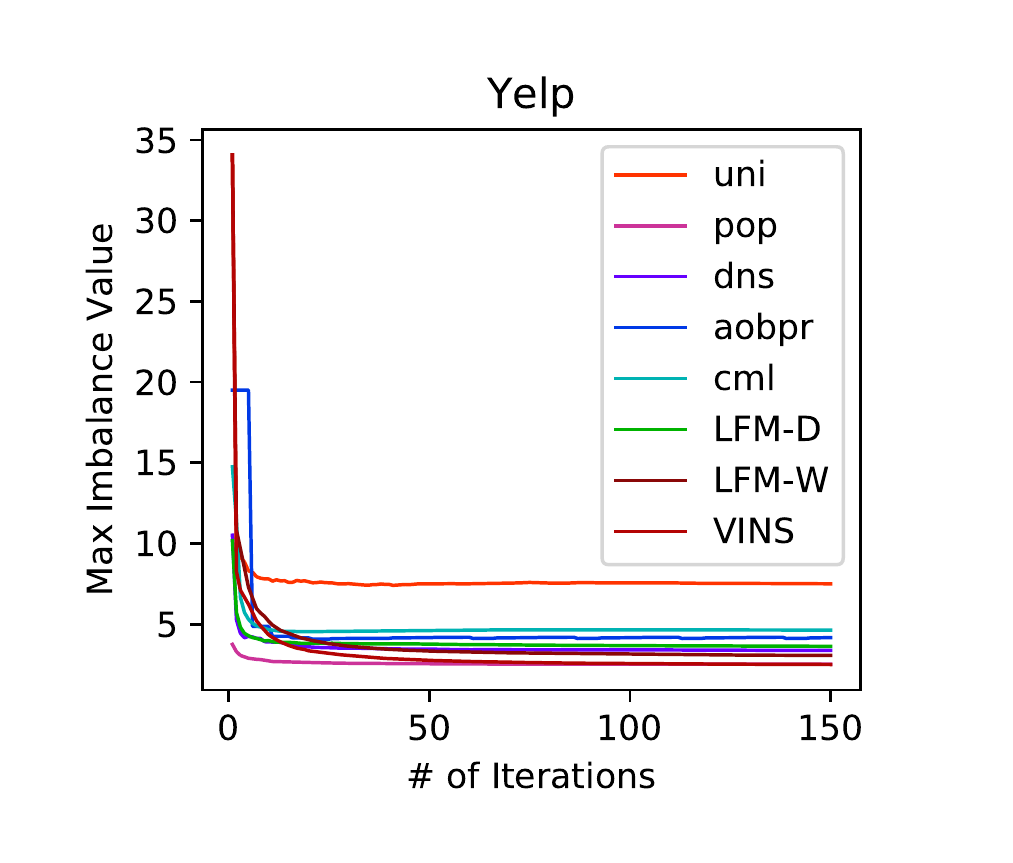}\\
\includegraphics[trim=30 10 30 15,scale=0.65]{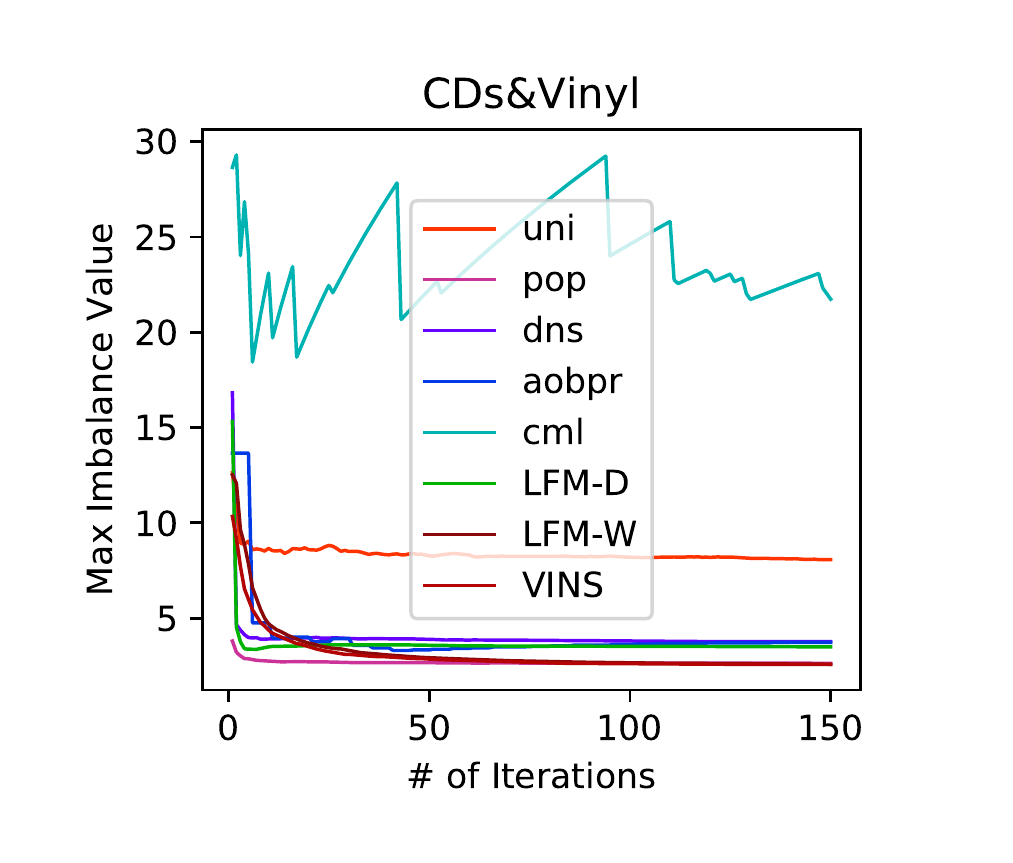}& \includegraphics[trim=10 10 30 30,scale=0.65]{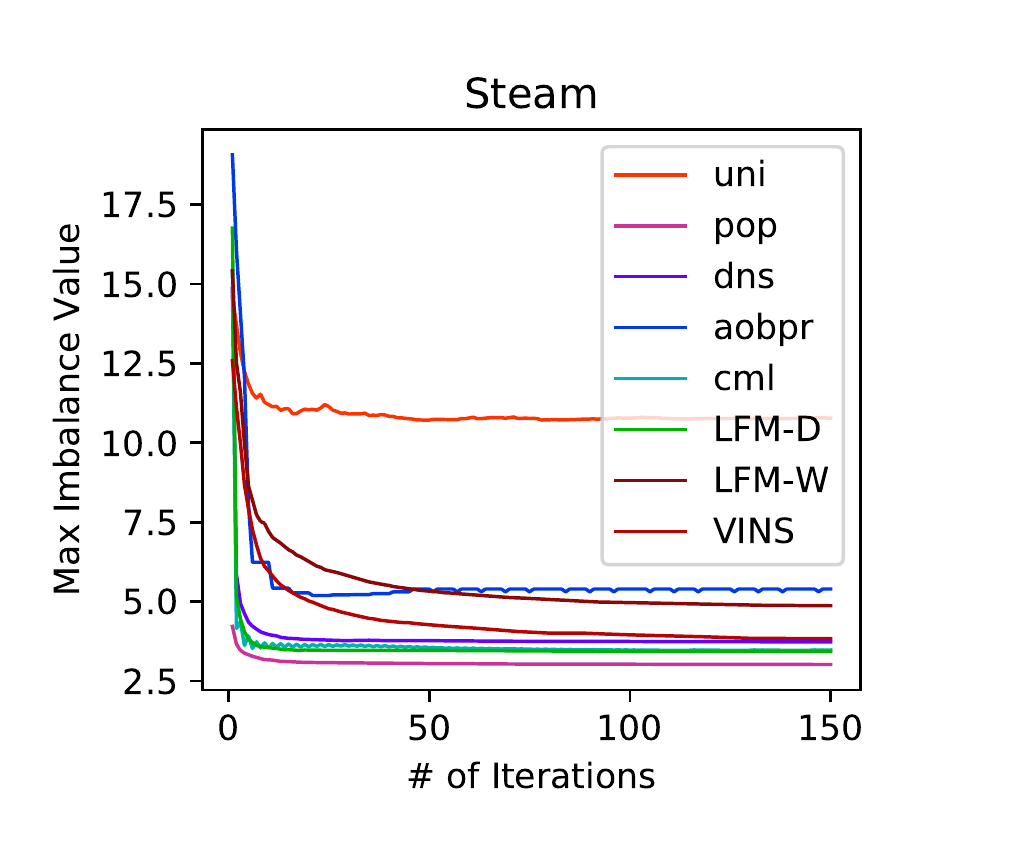}\\
\end{tabular}
\caption{Evolution  of maximum imbalance value of different sampling methods.}
\label{fig:imb:value:1}
\end{figure}

\subsubsection{Reproducibility}
All methods are optimized with Adam and implemented in Tensorflow with a GeForce GTX 1080Ti GPU. We share the parameter setting of the optimizer for all baselines and experiments in this work, with default learning rate $\eta = 0.001$. We use grid search  to examine the hyper-parameters, including the embedding size from \{16, 64, 128\}, $\lambda$ from \{0.0005, 0.001, 0.005, 0.01\}. Different baselines have their own hyper-parameters. For decay factor $\beta$ in POP sampler, the search space includes \{0.25, 0.5, 0.75, 1\}. Both CML and DNS need a number of negative candidates. In this work, a small number \emph{e.g.}, 10 or 20  gives good enough results as suggested by the authors~\cite{Zhang:2013:OTC,Hsieh:2017:CML}. LFM-D needs two hyper-parameters, the number of negative candidates, and the expected sampling position. For the first one, it is the same as DNS, but usually needs a little larger number, \emph{e.g.}, 20 in this work. The expected sampling position can be obtained by multiplying the number of negative candidates with a ratio factor $\rho$. The search space for $\rho$ was \{0.01, 0.05, 0.1, 0.5\}, and $\rho=0.1$ gives the best results. AOBPR also needs to set the ratio factor $\rho$, and produces best results with $\rho = 0.1$. LFM-W only has a margin parameter $\epsilon$ besides the optimizer parameters and regularization term. This parameter actually varies as the type of employed optimizer and the validation model. We search the best choice $\epsilon$ from \{1, 2, 3, 4\} for both LFM-W and VINS. For VINS, we need to search the best choice for buffer size $\kappa$ and decay factor $\beta$. In this work, we find that $\kappa$ = 64  or 128 is good enough according to the analysis results. In terms of IRGAN, we implement this method with the published code~\footnote{https://github.com/geek-ai/irgan} and suggested setting. In self-adversarial method (SA)~\footnote{https://github.com/DeepGraphLearning/KnowledgeGraphEmbedding}, the discriminator and generator are the same prediction model. It creates an adversarial item by aggregating a number of negative items. In this work, we tried different settings from \{64, 128, 256\}, and select the best value \emph{i.e.} 256. We follow the suggested setting by the authors to set up PRIS~\cite{Lian:2020:PRIS}.

\begin{figure}
\centering
\begin{tabular}{c c}
\includegraphics[trim=0 10 30 15,scale=0.65]{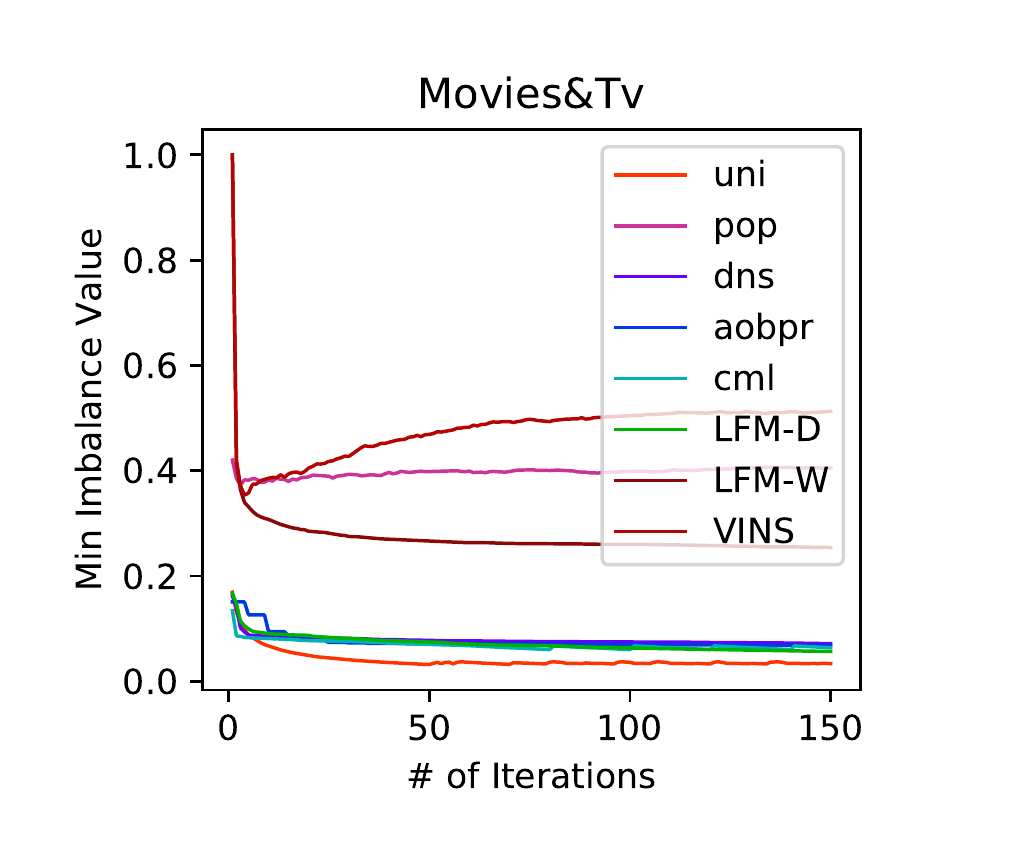}& \includegraphics[trim=10 10 30 30,scale=0.65]{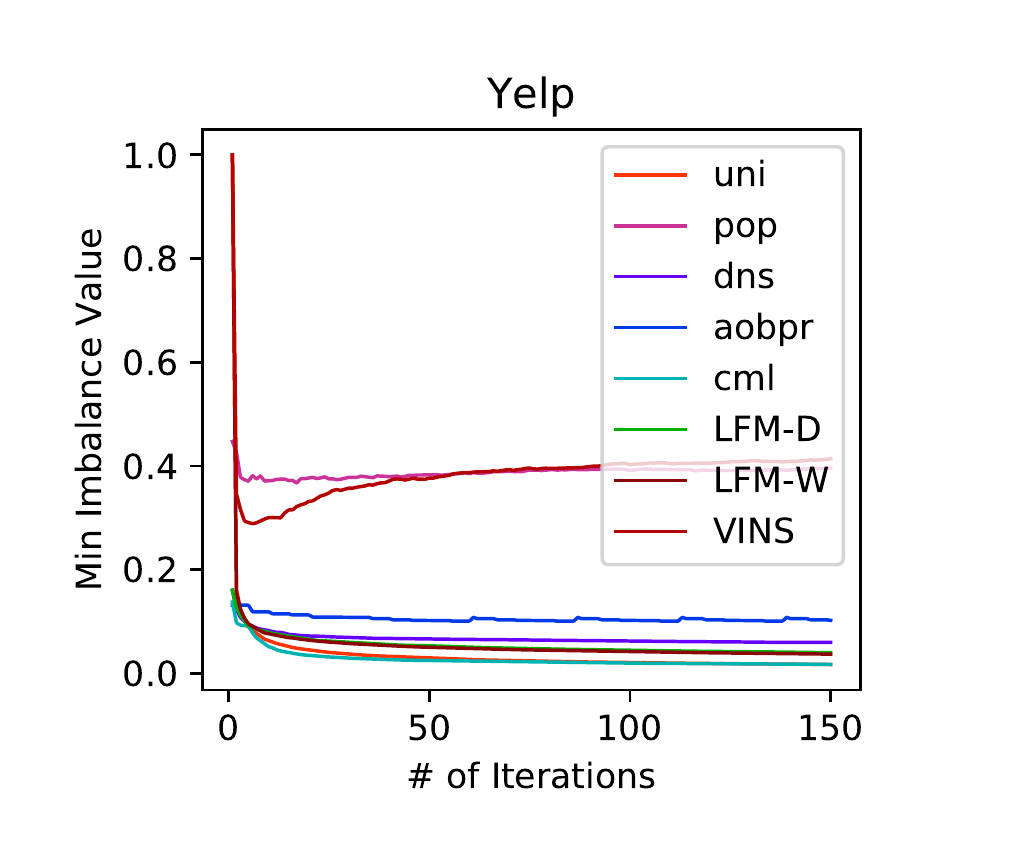}\\
\includegraphics[trim=0 10 30 15,scale=0.65]{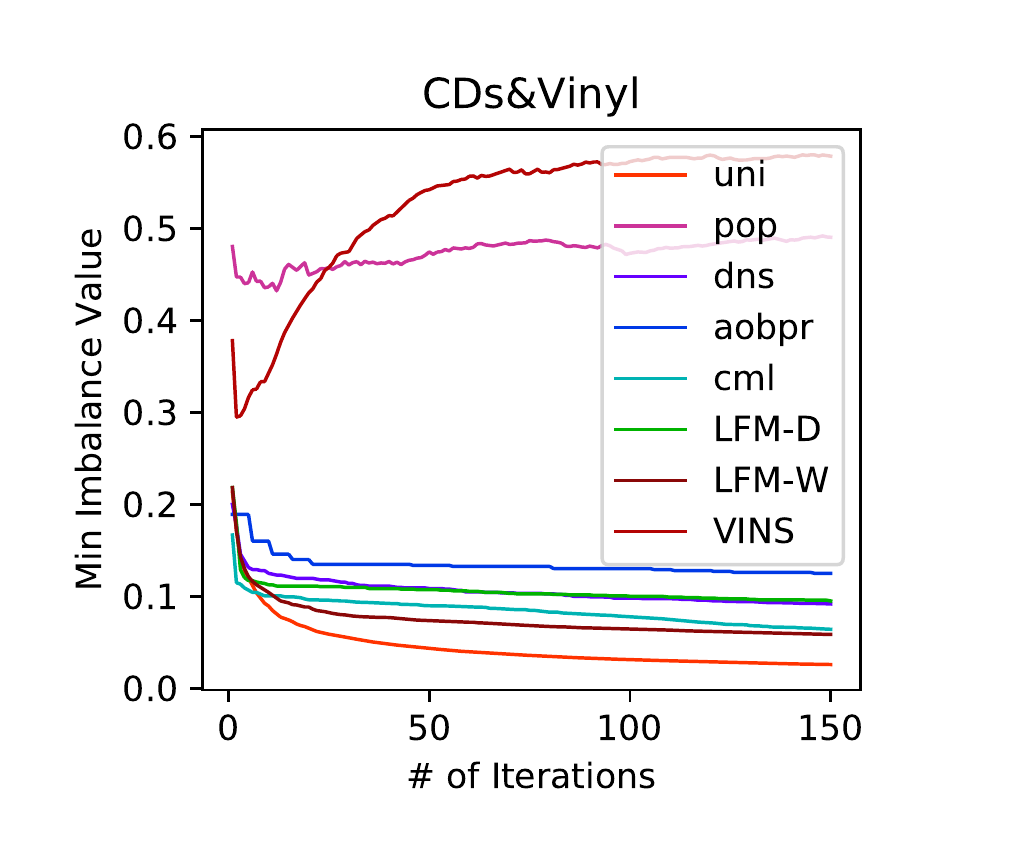} & \includegraphics[trim=10 10 30 30,scale=0.65]{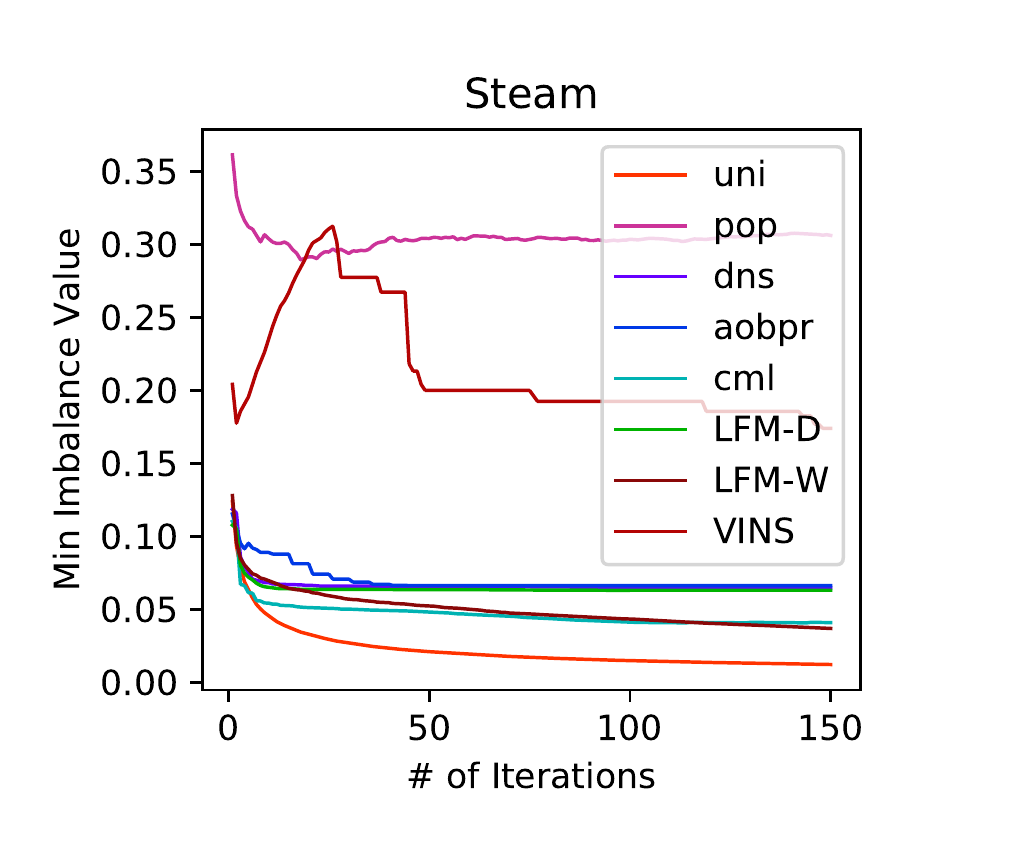}
\end{tabular}
\caption{Evolution  of minimum imbalance value of different sampling methods.}
\label{fig:imb:value:2}
\end{figure}

\subsection{Item Imbalance Value Evaluation (RQ1)}
To evaluate the \emph{item imbalance value} when applying different sampling methods, we count the number of appearance in positive and negative samples for each item. Then we track the evolution of the maximum and minimum imbalance value. Due to the characteristics of adversarial-like methods themselves such as SA, PRIS, IRGAN, it's difficult to catch the evolution of items' imbalance value. Therefore, we discard them and focus on the other methods. It is expected that non-uniform sampling methods can downgrade the maximum but increase the minimum imbalance value comparing with the UNI method. From the results shown in Figure \ref{fig:imb:value:1} and \ref{fig:imb:value:2}, we can find that most of baselines reach the expectation. The proposed method VINS does not ideally increase the minimum class-imbalance value in Steam data. However, VINS keeps imbalance value larger than the other methods except POP, and with the help of adaptive sampling strategy, VINS achieves better performance than the baselines from the results shown in Table \ref{tab:base:res}. Combining with the overall performance shown in Table \ref{tab:base:res}, we can see that most of the methods have better recommendation performance than the UNI method. From this point of view, alleviating the class-imbalance issue has positive effect on the performance of learned model. It's also consistent with our theoretical analysis in previous sections. 
\begin{table*}[tp]
\centering
\caption{Ranking performance when using different sampling methods with MF as the recommender for top-10 recommendation. The best baseline is marked with underline, the second best is marked by *.}
\label{tab:base:res}
\begin{adjustbox}{max width=14cm}
\begin{tabular}{c | c |  c c | c c | c c | c c }
\toprule 
	\multicolumn{1}{c|}{Method} & \multicolumn{1}{c|}{Sampler} & \multicolumn{2}{c|}{Yelp} & \multicolumn{2}{c|}{Movies\&Tv} & \multicolumn{2}{c|}{CDs\&Vinyl} & \multicolumn{2}{c}{Steam}\\
	\cmidrule{1-10}
	& & F1@10 & NDCG@10 & F1@10 & NDCG@10 & F1@10 & NDCG@10 & F1@10 & NDCG@10\\
	\cmidrule{1-10}
	Item-KNN & & 0.0153&0.0205&0.0178&0.0258& 0.0191 & 0.0261 & 0.0296 & 0.0409 \\
	\cmidrule{1-10}
	\multirow{10}{*}{MF} & Uni  & 0.0135&0.0168&0.0146&0.0186& 0.0195 & 0.0249 & 0.0338 & 0.0457 \\
	& POP & 0.0129 & 0.0161 & 0.0179 & 0.0232& 0.0229 & 0.0301 & 0.0333 & 0.0472 \\ 
	& AOBPR & 0.0140&0.0173 & 0.0153&0.0197& 0.0211&0.0278 & 0.0334 & 0.0463 \\ 
	& CML & 0.0177&0.0216 & 0.0133&0.0179& 0.0205&0.0276 & 0.0239 & 0.0317\\ 
	& PRIS & 0.0158 & 0.0210 & 0.0148&0.0192& 0.0204&0.0265 & 0.0374 & 0.0502 \\ 
	& SA & 0.0161&0.0199 & 0.0159&0.0206& 0.0243&0.0326 & 0.0347 & 0.0483 \\ 
	& IRGAN & 0.0188&0.0235 & 0.0206&0.0269& 0.0263&0.0348 & 0.0358 & 0.0512\\ 
	& DNS & *0.0197 & *0.0247& *0.0211& *0.0276& *0.0275 & *0.0366 & 0.0398 & 0.0551\\ 
	& LFM-D & 0.0187 & 0.0234 & 0.0204 & 0.0267 &0.0269 & 0.0354 & *0.0406 & *0.0561\\
	& LFM-W &\underline{0.0202}& \underline{0.0255}&  \underline{0.0236}& \underline{0.0313} & \underline{0.0301}& \underline{0.0401} & \underline{0.0414} & \underline{0.0569}\\
	\cmidrule{2-10}
	& VINS (ours) & \textbf{0.0222}& \textbf{0.0281}& \textbf{0.0245} & \textbf{0.0326} & \textbf{0.0310}& \textbf{0.0410} & \textbf{0.0429} & \textbf{0.0594}\\
	\cmidrule{1-10}
	\multirow{2}{*}{Improvement} & ours vs best &9.9\%&10.2\% & 3.81\% & 4.15\% & 2.99\%& 2.24\% & 3.62\% & 4.39\%\\
	\cmidrule{2-10}
	& ours vs second &12.7\%&13.7\% & 16.1\% & 18.1\% & 12.7\%& 12.0\% & 5.66\% & 5.88\%\\
\bottomrule
\end{tabular}
\end{adjustbox}
\end{table*}

\subsection{Advantages of VINS (RQ2)}
We evaluate the  advantages of VINS on ranking performance in different metrics, and training time efficiency. 
\begin{table}[tp]
\centering
\caption{Time complexity analysis: number of average steps $h'$ to find a negative sample by LFM-W and VINS. The term behind $\pm$ stands for the standard variance.}
\label{tab:avg:steps}
\begin{adjustbox}{max width=10.5cm}
\begin{tabular}{c c c c c c }
\toprule
	\cmidrule{1-6}
	Epoch & 5 & 10 & 20 & 50 & 150\\
\midrule
	\multicolumn{6}{c}{Yelp}\\
\midrule 	 	  	 	 
	LFM-W & 10.2$\pm$26.4 &17.0$\pm$ 52.2& 19.8$\pm$ 59.4& 21.5$\pm$ 63.2 & 21.7$\pm$65.2\\
\midrule
	VINS & 8.7$\pm$14.5 &11.8$\pm$ 17.4&14.6$\pm$ 19.7& 16.2$\pm$21.0&16.3$\pm$21.0\\
\midrule
	\multicolumn{6}{c}{Movies\&Tv}\\
\midrule 	 	  	 	 
	LFM-W & 3.2$\pm$8.0 & 6.5$\pm$24.7&12.0$\pm$42.7&18.4$\pm$60.9& 19.0$\pm$62.9 \\
\midrule
	VINS & 3.4$\pm$7.6 &6.2$\pm$12.2& 9.8$\pm$16.1&14.9$\pm$20.0& 16.2$\pm$20.7\\
\midrule
	\multicolumn{6}{c}{CDs\&Vinyl}\\
\midrule 	 	  	 	 
	LFM-W & 3.5$\pm$15.1 &10.0$\pm$40.6& 17.1$\pm$58.2& 28.5$\pm$83.3& 29.3$\pm$85.5\\
\midrule
	VINS & 3.8$\pm$10.1 &7.9$\pm$14.8&12.8$\pm$18.8&21.4$\pm$23.2& 23.4$\pm$23.9\\
\midrule
	\multicolumn{6}{c}{Steam}\\
\midrule 	 	  	 	 
	LFM-W &3.2$\pm$5.7& 4.1$\pm$8.8&5.0$\pm$11.1&6.2$\pm$16.0& 6.3$\pm$16.7\\
\midrule
	VINS &2.8$\pm$5.6&3.5$\pm$7.0&4.3$\pm$8.4&5.4$\pm$9.8&5.8$\pm$10.6\\
\bottomrule
\end{tabular}
\end{adjustbox}
\end{table}
\subsubsection{Ranking Performance}
Table \ref{tab:base:res} summarizes the ranking performance of different sampling methods when applied to optimizing the same objective function. Dynamic sampling methods LFM-W and VINS significantly outperform the other baselines with a clear margin. While, the proposed sampler VINS is superior to the state-of-the-art method LFM-W. This validates the effectiveness of VINS which selects the negative candidates with reject probability motivated by class-imbalance issue. 
\begin{table}[tp]
\centering
\caption{Time complexity comparison with different data scale in terms of average running time per epoch in minutes and relative time complexity to the simplest method ``Uni".}
\label{tab:run:time}
\begin{adjustbox}{max width=10.5cm}
\begin{tabular}{c | c | c | c | c}
\toprule
	Sampler & Steam (smallest) & CDs\&Vinyl & Movies\&Tv & Yelp (largest)\\
\midrule
	Uni  & 0.07 (1x)&0.1 (1x)&0.16 (1x) & 0.47 (1x) \\
\midrule
	POP & 0.09 (1.28x) & 0.13 (1.3x)& 0.2 (1.25x) & 0.58 (1.23x) \\
\midrule
	AOBPR & 0.05 (0.71x) & 0.23 (2.3x)& 0.32 (2x) & 1.98 (4.21x) \\ 
\midrule
	CML & 0.33 (4.71x) & 0.33 (3.3x)& 0.5 (3.1x) & 1.23 (2.61x)\\
\midrule 
	PRIS & 1.12 (16x) & 1.38 (13.8x)& 2.07 (12.9x) & 6.25 (13.3x) \\
\midrule 
	SA & 0.48 (6.85x) & 0.66 (6.6x)& 0.92 (5.75x) & 2.76 (5.87x) \\
\midrule 
	IRGAN & 3.85 (55x) & 4.54 (45.4x)& 5.6 (35x) & 23.4 (49.8x) \\
\midrule 
	DNS & 0.28 (4x)& 0.44 (4.4x) & 0.72 (4.5x) & 2.1 (4.46x)\\
\midrule
	LFM-D & 0.38 (5.42x) & 0.49 (4.9x) &1.1 (6.87x)  & 1.86 (3.95x)\\
\midrule
	LFM-W &0.35 (5x) & 1.58 (15.8x) & 1.65 (10.3x)& 4.78 (10.1x)\\
\midrule
	VINS & 0.25 (3.57x) & 1.08 (10.8x)& 1.12 (6.37x) & 3.05 (6.48x)\\
\bottomrule
\end{tabular}
\end{adjustbox}
\end{table}

\begin{table}[tp]
\centering
\caption{Performances with different buffer size.}
\label{tab:nores}
\begin{adjustbox}{max width=8.5cm}
\begin{tabular}{c c c c c c c}
\toprule
	\cmidrule{1-7}
	Buffer Size & 8 & 16 &32& 64 & 128 & $1024$\\
\midrule
	\multicolumn{7}{c}{Yelp-F1@10}\\
\midrule 	 	  	 	 
	LFM-W & 0.0138&0.0164& 0.0180&0.0189&0.0197&0.0202\\
\midrule
	VINS & 0.0169&0.0185&0.0205& 0.0222& 0.0225&0.0223\\
\midrule
	\multicolumn{7}{c}{Yelp-NDCG@10}\\
\midrule 
	LFM-W & 0.0185&0.0204&0.0224& 0.0238& 0.0251& 0.0255\\
\midrule
	VINS & 0.0209&0.0234&0.0253& 0.0281& 0.0284& 0.0281\\
\midrule
	\multicolumn{7}{c}{Movies\&Tv-F1@10}\\
\midrule 	 	  	 	 
	LFM-W & 0.0193&0.0215& 0.0223&0.0228&0.0232& 0.0236\\
\midrule
	VINS & 0.0222&0.0228& 0.0235&0.0245&0.0243&0.0246\\
\midrule
	\multicolumn{7}{c}{Movies\&Tv-NDCG@10}\\
\midrule 
	LFM-W & 0.0252&0.0279&0.0295&0.0301&0.0305&0.0313\\
\midrule
	VINS & 0.029&0.0302&0.0308&0.0326&0.0325&0.0326\\
\midrule
	\multicolumn{7}{c}{CDs\&Vinyl-F1@10}\\
\midrule 	 	  	 	 
	LFM-W & 0.0249&0.0270& 0.0278&0.0296&0.0298& 0.0301\\
\midrule
	VINS & 0.0270&0.0285&0.0296&0.0310&0.0311&0.0312\\
\midrule
	\multicolumn{7}{c}{CDs\&Vinyl-NDCG@10}\\
\midrule 
	LFM-W & 0.0328&0.0352&0.0365&0.0392&0.0398&0.0401\\
\midrule
	VINS & 0.0361&0.0376&0.0397&0.0402&0.041&0.0412\\
\midrule
	\multicolumn{7}{c}{Steam-F1@10}\\
\midrule 	 	  	 	 
	LFM-W & 0.0389&0.0399&0.0404 &0.0408&0.0409& 0.0414\\
\midrule
	VINS & 0.0408& 0.0418&0.0426&0.0429&0.0430&0.0428\\
\midrule
	\multicolumn{7}{c}{Steam-NDCG@10}\\
\midrule 
	LFM-W & 0.0533 & 0.0547 & 0.0552 & 0.0566 &0.0568 &0.0569\\
\midrule
	VINS & 0.0567 &0.0588&0.0603&0.0601&0.0601&0.0603\\
\bottomrule
\end{tabular}
\end{adjustbox}
\end{table}
\begin{figure}[tp]
\centering
\begin{tabular}{c c}
\includegraphics[trim=20 10 30 0, scale=0.8]{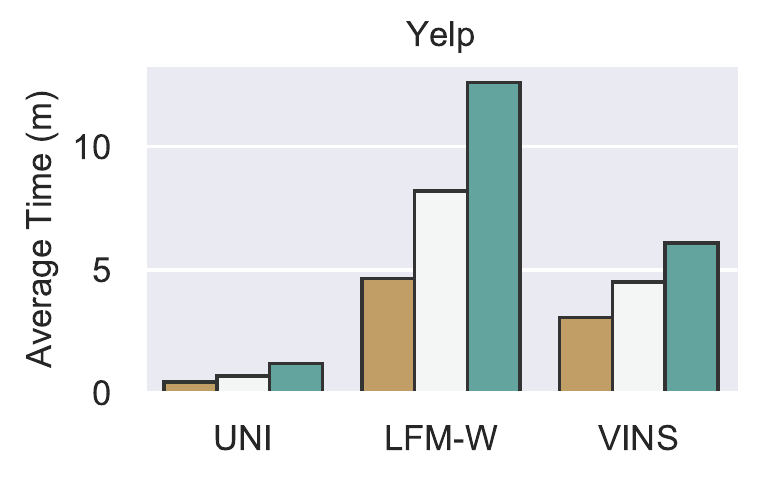} & \includegraphics[trim=0 10 30 0, scale=0.8]{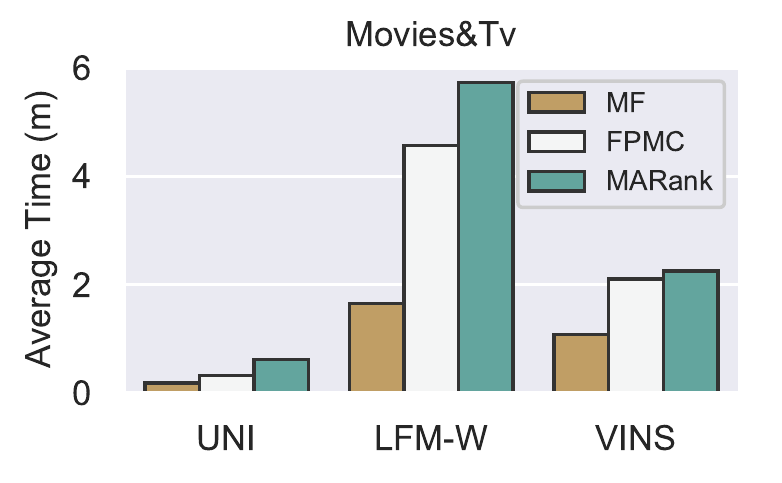}\\
\includegraphics[trim=20 10 30 0, scale=0.8]{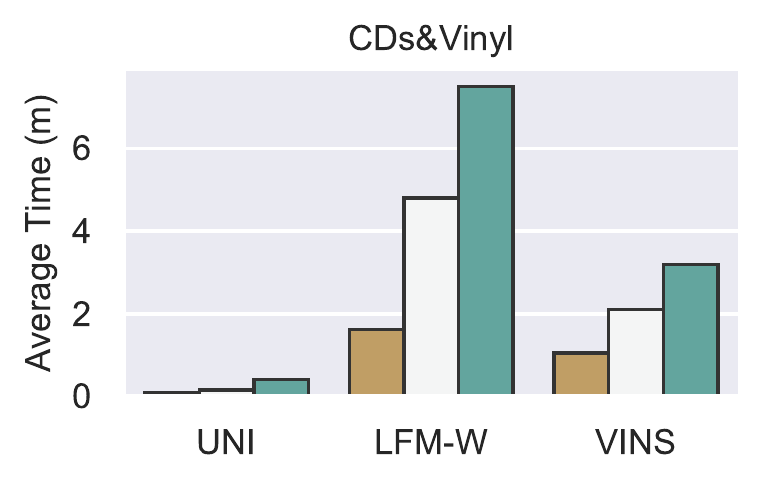} & \includegraphics[trim=0 10 30 0, scale=0.8]{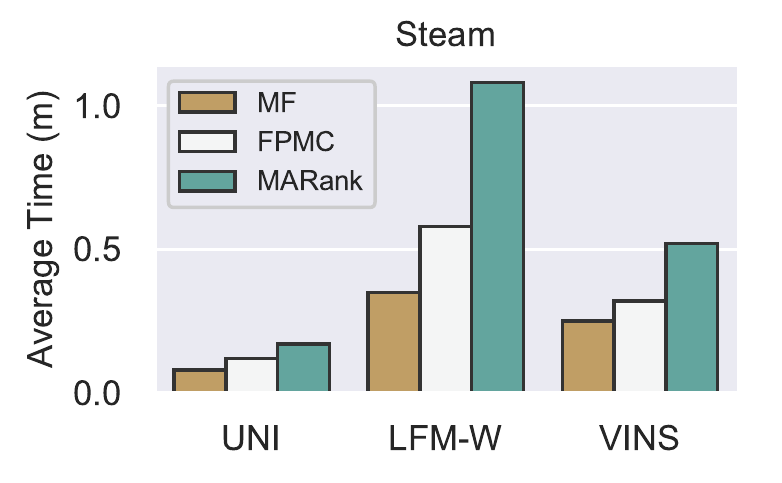}
\end{tabular}
\caption{Time complexity according to the growth of model complexity.}
\label{fig:mod:comp}
\end{figure}

\subsubsection{Time Complexity}
From Table \ref{tab:run:time}, we can see that as the data scale up in size, all samplers will need more time. Especially, LFM-W  needs over 10x more time  comparing with stationary sampling methods, while VINS is more efficient than LFM-W. 

The average number of  steps to find a violated negative sample is the key for the time complexity analysis. As  discussed in Section \ref{vins}, time complexity of dynamic sampling approaches like LFM-W and VINS heavily depends on the search of a proper negative sample from massive trials. To further investigate the sampling process, we use the buffer technology for both methods\footnote{The original LFM-W did not define a buffer, we set the maximum of sampling trials for \textbf{LFM-W} to 1024.} to show the connection between the model performance convergence and maximum steps to sample a violated item. The results shown in Table \ref{tab:nores} demonstrate  that VINS can converge to stable performance with less trials for each positive sample, while LFM-W needs a larger buffer with at least 1024 slots. The results suggest $\kappa$=64 for VINS, to keep a balance between training efficiency and model performance. VINS can converge to the better solution than LFM-W, meanwhile needs only a small number of trials to find a violated shown in Table \ref{tab:avg:steps}. This leads to over 30\% training time saved comparing with LFM-W, shown in Table \ref{tab:run:time}. 
\subsection{Improvement on Computationally expensive Methods (RQ3)}

\begin{figure}[tp]
\centering
\begin{tabular}{c c}
\includegraphics[trim=40 10 30 20,scale=0.5]{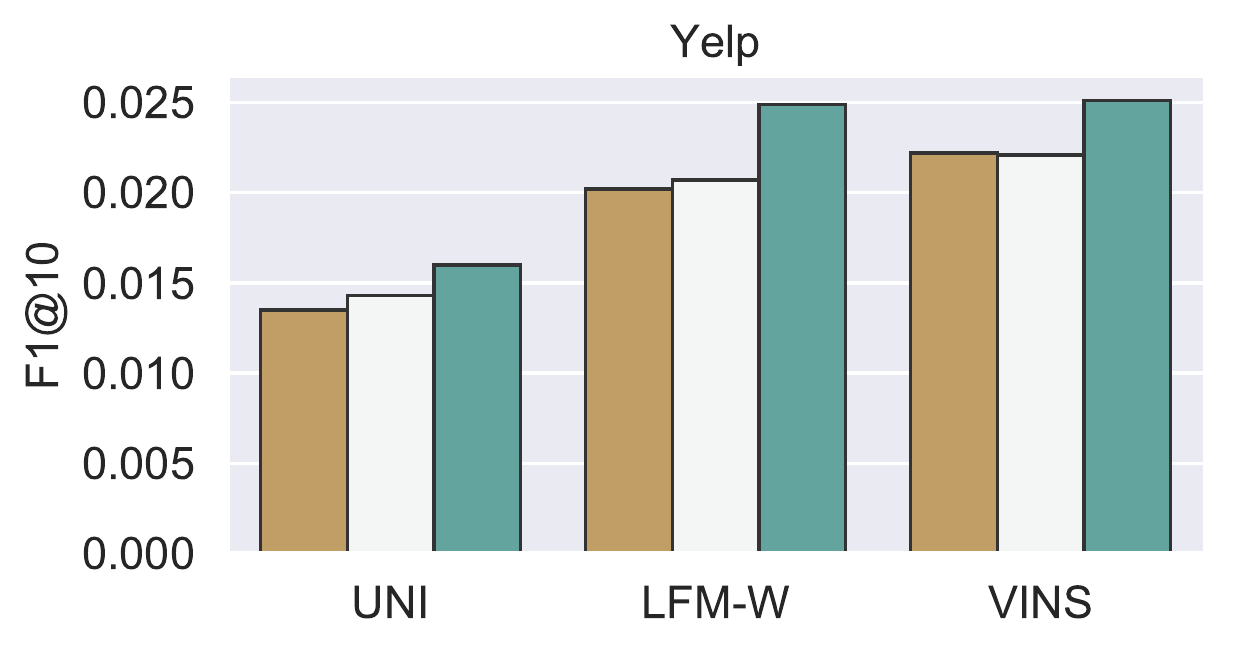}& \includegraphics[trim=10 10 30 20,scale=0.5]{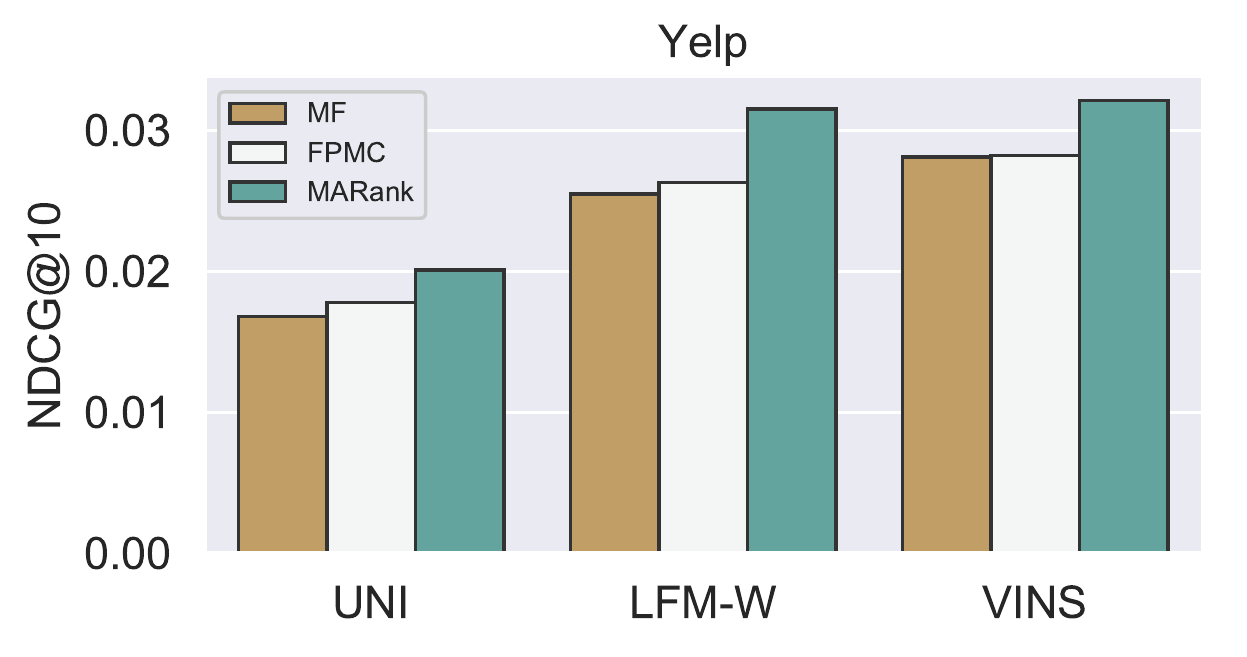}\\
\includegraphics[trim=40 10 30 20,scale=0.5]{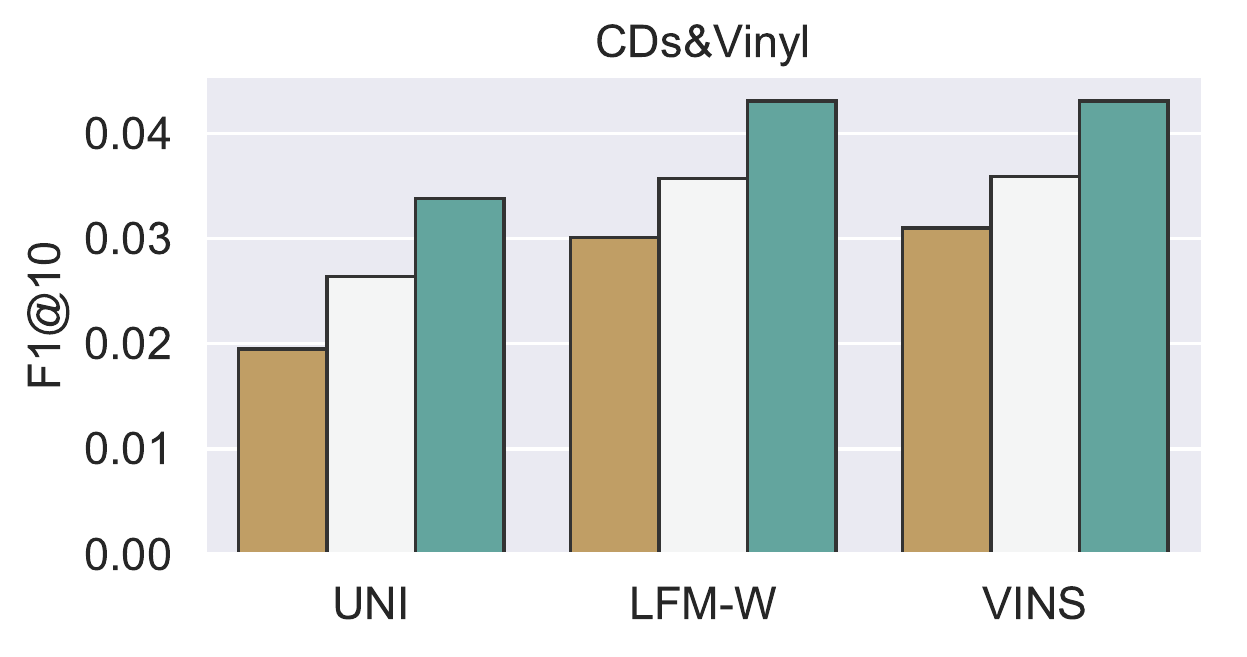}& \includegraphics[trim=10 10 30 20,scale=0.5]{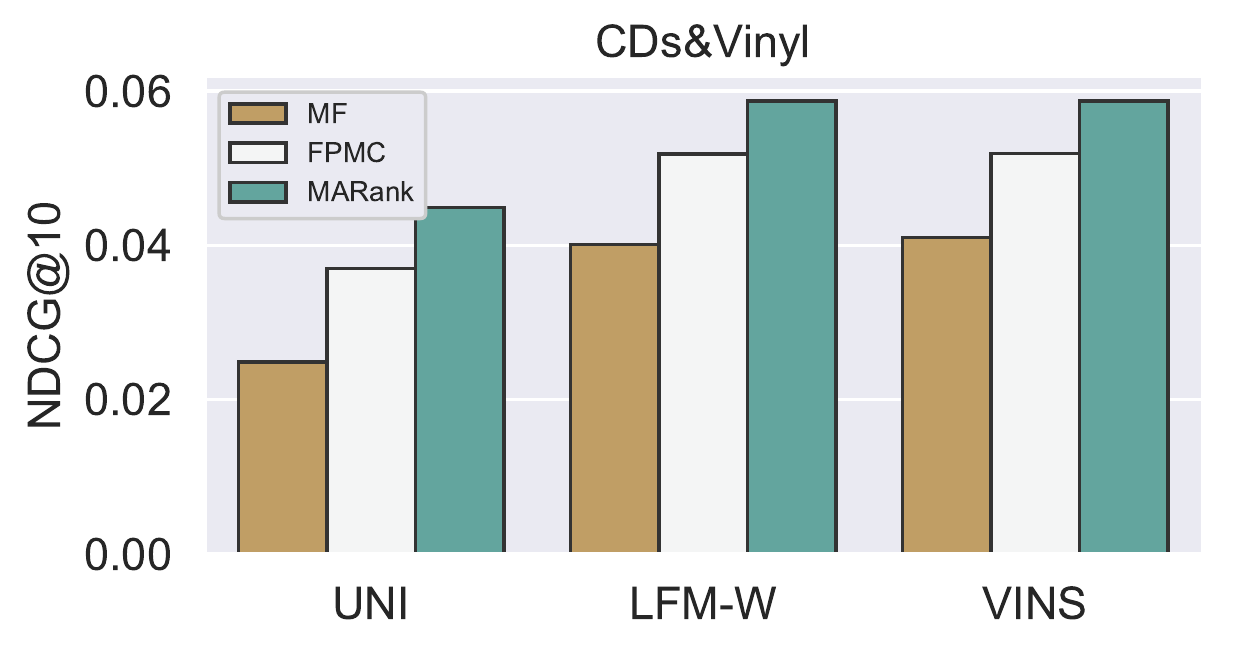}
\end{tabular}
\caption{Ranking performance on F1 metric of shallow and deep models.}
\label{fig:model:metric:f1}
\end{figure}


By far, we only apply the dynamic sampling methods on a linear recommendation model (MF). It is also interesting to evaluate their performance on   more complicated models, for example FPMC and MARank, for next-item prediction. 
From the experimental results shown in Figure \ref{fig:mod:comp} we can find that VINS can save more training time (from 50\% to 60\%) than LFM-W ranging from shallow model FPMC to deep attentive model MARank, while reaching the best recommendation performance shown in Figure \ref{fig:model:metric:f1} and \ref{fig:model:metric:ndcg}. 
This significant acceleration of recommendation model training verifies that VINS is an effective dynamic negative sampling method. Especially for deep neural models training, VINS is a promising tool to select the most useful negative samples for achieving both significant reduction of training time and improvement of inference capability. 

\begin{figure}[tp]
\centering
\begin{tabular}{c c}
\includegraphics[trim=40 10 30 0,scale=0.5]{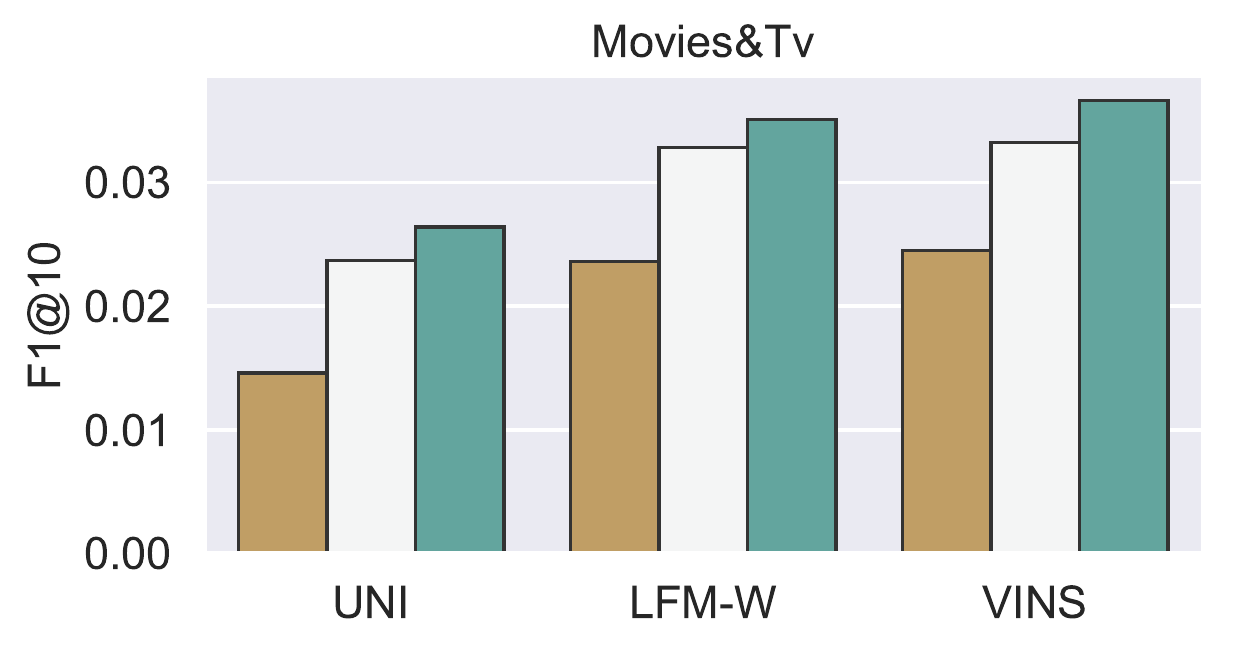}& \includegraphics[trim=10 10 30 0,scale=0.5]{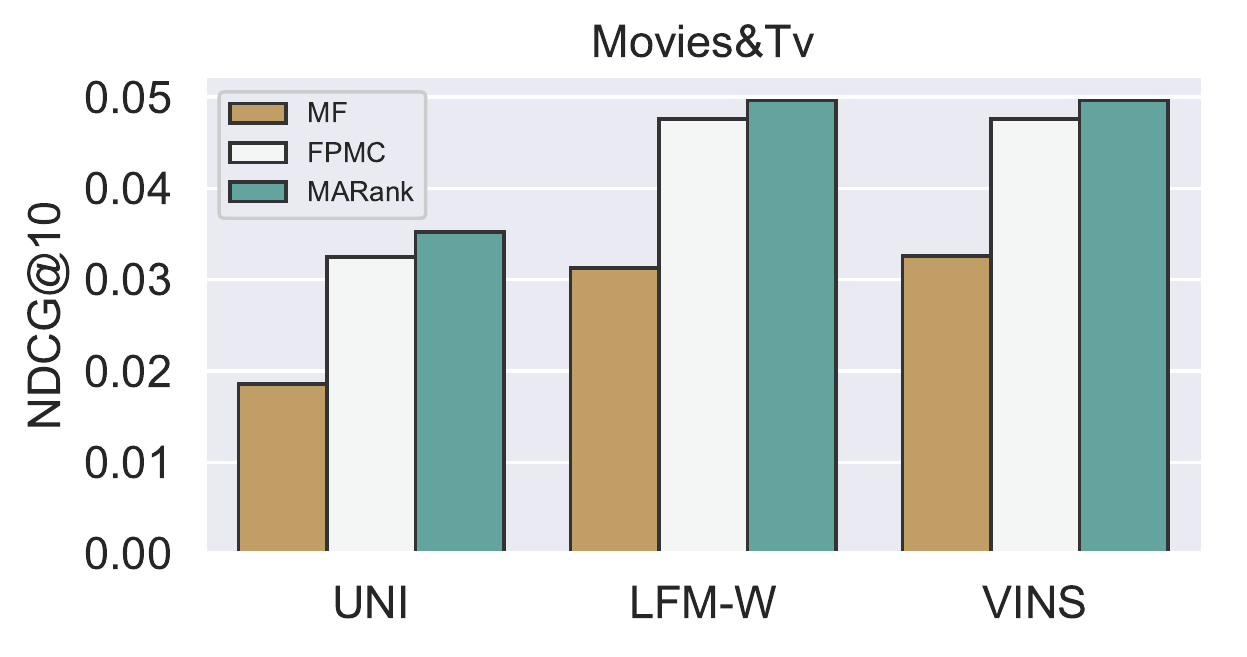}\\
\includegraphics[trim=40 10 30 0,scale=0.5]{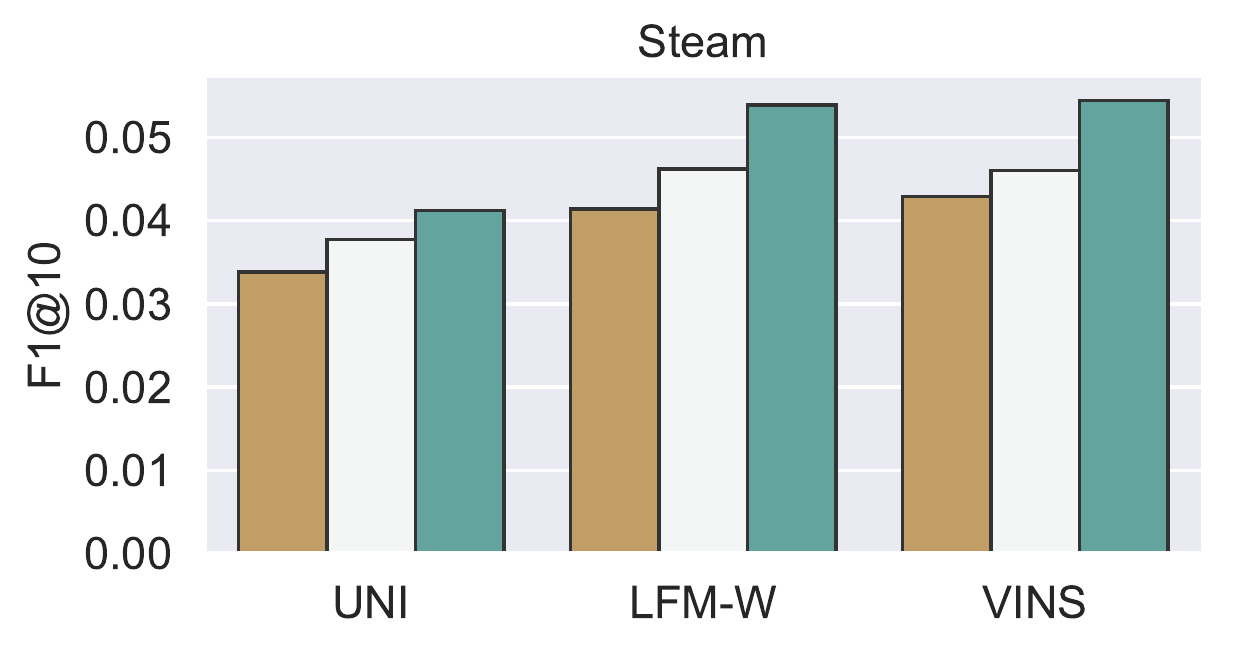}& \includegraphics[trim=10 10 30 0,scale=0.5]{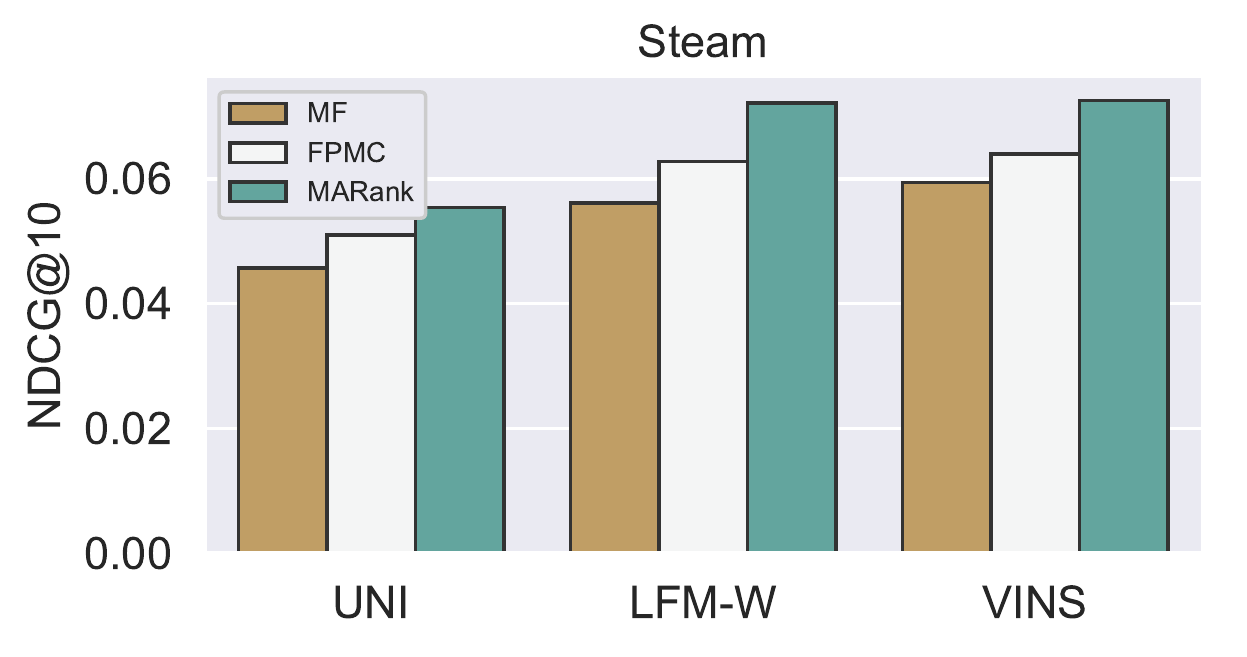}
\end{tabular}
\caption{Ranking performance on NDCG metric of shallow and deep models.}
\label{fig:model:metric:ndcg}
\end{figure}

 
\section{Conclusions}
In this work, we systematically study the class-imbalance problem in pairwise ranking optimization for recommendation tasks. We indicate out the edge- and vertex-level imbalance problem, and show its connection to sampling a negative item from static distribution. To tackle the challenges raised by the class-imbalance problem, we propose a two-phase sampling approach to alleviate the imbalance issue by tending to sample a negative item with a larger degree and close prediction score to the given positive sample. We conduct thorough experiments to show that the biased sampling method with reject probability can help to find violated samples more efficiently, meanwhile having a competitive or even better performance with state-of-the-art methods. Dynamic sampling methods are always more costly than stationary sampling methods, due to the process of finding a violated negative item by continually comparing the predicted value of positive and negative samples. The proposed method VINS can help to reduce the number of steps to find a negative sample, therefore reduce the computation cost. Due to this appealing feature of VINS, we can bring its advantages  to learn more powerful deep neural models for different tasks that will take pairwise loss as the optimization objective.

\end{document}